\newtheorem{thm}{Theorem}[section]
\newtheorem{lem}[thm]{Lemma}
\newtheorem{prop}[thm]{Proposition}
\newtheorem{defn}[thm]{Definition}
\newtheorem{remark}{Remark}
\numberwithin{equation}{section}
\newcommand{\Z}{{\mathbb Z}} 
\newcommand{\R}{{\mathbb R}}
\newcommand{\C}{{\mathbb C}}
\newcommand{\T}{{\mathbb T}}
\newcommand{\fixmehidden}[1]{}
\def\cB{{\mathcal B}}
\def\cL{{\mathcal L}}
\def\cM{{\mathcal M}}
\def\cN{{\mathcal N}}
\def\cR{{\mathcal R}}
\newcommand{\Vol}{\operatorname{vol}}
\newcommand{\ave}[1]{\left\langle#1\right\rangle} 
\newcommand{\beq}{\begin{equation}}
\newcommand{\eeq}{\end{equation}}
\newcommand{\re}{\operatorname{Re}}
\begin{document} 

\begin{abstract}  
Whereas much work in the mathematical literature on quantum chaos has focused on phenomena such as quantum ergodicity and scarring, relatively little is known at the rigorous level about the existence of eigenfunctions whose morphology is more complex. 

Quantum systems whose dynamics is intermediate between certain regimes -- for example, at the transition between Anderson localized and delocalized eigenfunctions, or in systems whose classical dynamics  is intermediate between integrability and chaos -- have been conjectured in the physics literature to have eigenfunctions exhibiting multifractal, self-similar structure. To-date, no rigorous mathematical results have been obtained about systems of this kind in the context of quantum chaos.   

We give here the first rigorous proof of the existence of multifractal eigenfunctions for a widely studied class of intermediate quantum systems. Specifically, we derive an analytical formula for the Renyi entropy associated with the eigenfunctions of arithmetic \u{S}eba billiards, in the semiclassical limit, as the associated eigenvalues tend to infinity. 

We also prove multifractality of the ground state for more general, non-arithmetic billiards and show that the fractal exponent in this regime satisfies a symmetry relation, similar to the one predicted in the physics literature, by establishing a connection with the functional equation for Epstein's zeta function.
\end{abstract}

\title[Multifractal eigenfunctions]{Multifractal eigenfunctions \\ for a singular quantum billiard}
\author{Jonathan P. Keating}
\address{Mathematical Institute,
University of Oxford,
Andrew Wiles Building,
Radcliffe Observatory Quarter,
Woodstock Road,
Oxford
OX2 6GG, UK.}
\email{jon.keating@maths.ox.ac.uk}
\author{Henrik Uebersch\"ar}
\address{Sorbonne Universit\'e and Universit\'e de Paris, CNRS, IMJ-PRG, F-75006 Paris.}
\email{henrik.ueberschar@cnrs.fr}

\thanks{}
\date{\today} 
\maketitle 

\section{Introduction}

Following the proof of the quantum ergodicity theorem by Snirelman, Zelditch and Colin de Verdi\`ere \cite{Sn74,Z87,CdV85}, one of the central questions in the quantum chaos literature over the past fourty years has been the classification of semiclassical measures which may arise in relation to the eigenfunctions of quantized chaotic systems. Rudnick and Sarnak proposed their Quantum Unique Ergodicity (QUE) conjecture in 1994 \cite{RS94}, asserting that the unique semiclassical measure which should arise along the sequence of Laplace eigenfunctions on manifolds with negative curvature is the Liouville measure. This conjecture was proved by Lindenstrauss and Soundararajan \cite{L06,So10} in the case of arithmetic hyperbolic surfaces. Holowinsky and Soundararajan proved a holomorphic analogue of QUE for Hecke eigenforms \cite{Ho10a,Ho10b}. Anantharaman \cite{A08} generalized this work by ruling out the existence of semiclassical measures localized on a union of points and geodesic segments for Anosov manifolds. On the other hand, Hassell showed that the stadium billiard, although ergodic, does not satisfy QUE \cite{Ha10}. The existence of scars was proved by Faure, Nonnenmacher and de Bi\`evre for hyperbolic toral automorphisms of minimal periods \cite{FNdB03}.

Another class of systems which has attracted considerable attention in the quantum chaos literature are those termed pseudointegrable \cite{RB81}. Such systems are classically close to integrable, yet their quantum dynamics displays features similar to those found in chaotic systems. In particular, their dynamics in phase space is not constrained to tori, but more generally to handled spheres.

Examples of pseudointegrable billiards are rational polygonal billiards. More generally, one sometimes includes systems which can be understood as toy models of pseudointegrable systems, such as quantum star graphs \cite{KMW03}, parabolic toral automorphisms \cite{MR00} and \u{S}eba billiards \cite{Se90}. The latter is a rectangular billiard with a Dirac mass placed in its interior. This system can be rigorously studied by means of von Neumann's theory of self-adjoint extensions. Over the past twenty years much work has been devoted to the study of the spectrum and eigenfunctions of such billiards \cite{BoLeSch00,BeBoK01,BK99,BKW03,BKW04,KMW03,RU12,KU14,KU17,KRo17,KLeRo20}. In particular, arithmetic \u{S}eba billiards have been shown to be quantum ergodic \cite{RU12,KU14}, whereas diophantine \u{S}eba billiards possess scarred semiclassical measures \cite{KU17,KRo17}.

While phenomena such as quantum ergodicity and scarring have attracted a good deal of attention in the mathematical literature on quantum chaos, it is important to point out that many quantum systems possess eigenfunctions whose morphology is far more complex than is simply characterised by their being purely localized or ergodic. For example,  in a wide range of quantum systems in which the spectral statistics sit at the transition between Poisson and random-matrix, it has been argued that the eigenfunctions should exhibit {\em multifractal} self-similar structure. Pseudointegrable systems are one such class of examples, because their dynamics is intermediate between integrability and chaos.

Multifractality is a phenomenon which is observed in many scientific fields ranging from financial mathematics and biology to quantum physics. It describes systems whose self-similarity in a certain scaling regime is so complex that it cannot be described by a single fractal exponent but only by a spectrum of exponents. The multifractality of eigenfunctions is one of the most important phenomena in the theoretical and experimental study of intermediate quantum systems. Physicists have conjectured its existence in a large of class of systems such as the Anderson model \cite{Al86,SG91,Mi00}, pseudointegrable systems \cite{Gi20}, quantum spin chains \cite{AtBo12}, quantum maps \cite{Gi10} and in certain ensembles of random matrices \cite{KM97,BoGi11}, based on heuristic arguments and extensive numerical investigations. Moreover, they have conjectured that the fractal exponents $D_q$ should satisfy a symmetry relation \cite{Gi10} with respect to the critical parameter value $q=\frac{1}{4}$. For an introduction to the extensive physics literature on this subject, see \cite{BHK19}.
 
Although multifractality is an extremely active field in physics, and theoretical evidence for the multifractality of quantum eigenfunctions has been provided by many works in the physics literature as well as physical and numerical experiments, no rigorous mathematical proof is known to date for any class of intermediate systems.

In this paper we prove the multifractality of the eigenfunctions for \u{S}eba's billiard in the strong coupling regime. We derive an analytical formula for the fractal exponents $D_q$. We also prove multifractality of the ground state and show that the symmetry relation of the fractal exponents around the critical exponent $q=\frac{1}{4}$ is a consequence of the functional equation of the Epstein zeta function.

{\bf Acknowledgements:} This work was accomplished during several visits to the Mathematical Institute at the University of Oxford. H.U. would like to thank the institute and particularly Jon Keating's group for their kind hospitality during his stay. The authors would like to thank Nalini Anantharaman, Jens Marklof and Zeev Rudnick for many helpful comments which led to the improvement of this paper.

H.U. gratefully acknowledges the support of the CNRS by means of a ``d\'el\'egation CNRS''. In addition, H.U. was supported by the grant ANR-17-CE40-0011-01 of the French National
Research Agency ANR (project SpInQS).  JPK is pleased to acknowledge support from ERC Advanced Grant 740900 (LogCorRM).  
 


\section{Background and statement of the results}
\subsection{Mathematical formulation of the problem} 
Denote by $(X,\mu)$ a measure space. In classical multifractal analysis one usually evaluates $\mu$ on a box partition $\cB_r$ of $X$, where $r$ denotes the side length of the boxes. To characterise the self-similarity of the measure $\mu$ one studies the behaviour of the moment sum
$$M_q(r)=\sum_{B\in \cB_r} \mu(B)^q$$ in the limit as $r\to 0$. One expects a scaling law of the form $M_q(r)\sim r^{D_q}$, where $D_q$ denotes the fractal exponent. A key goal of multifractal analysis is to determine $D_q$ as a function of $q$.

Let us consider the Laplace-Beltrami operator $-\Delta$ on a two-dimensional Euclidean compact manifold $\cM$ with or without boundary. An eigenfunction $\psi_\lambda$ of $-\Delta$, satisfying $(\Delta+\lambda)\psi_\lambda=0$, may be represented as a superposition of plane waves
\begin{equation}\label{Fourier}
\psi_\lambda(x)=\sum_{\xi\in\cL}\hat{\psi}_\lambda(\xi)e^{i\xi\cdot x}.
\end{equation} 
This may be achieved by embedding the Euclidean billiard in a rectangular enclosure $\cR$ and representing $\psi_\lambda$ with respect to the canonical orthonormal basis of complex exponentials $e^{i \xi\cdot x}$, where $\xi\in\cL$ and $\cL$ is a rectangular lattice. We note that the choice of rectangle, and therefore of the lattice $\cL$, ought to translate into a scaling factor which will appear in lower order terms, because, in order to calculate the fractal exponent $D_q$, one considers the asymptotic behaviour of $\log M_q(r)$. A rigorous formulation of the problem of multifractal eigenfunctions on Euclidean billiards is beyond the scope of this introduction and will be addressed in a separate article. We stress that for the case of \u{S}eba billiards no such ambiguity with respect to the choice of lattice exists.

Let us introduce the probability measure $$\mu_\lambda(\xi)=|\hat{\psi}_\lambda(\xi)|^2$$ where we assume that $\psi$ is $L^2$-normalized.

We consider the moment sum 
\begin{equation}\label{moment_sum}
M_q(\lambda)=\sum_{\xi\in\cL}\mu_\lambda(\xi)^{q}.
\end{equation}

To understand how to determine the scaling parameter which plays the role of $r$, one must distinguish different regimes. 
We will focus here first on the semiclassical regime ($\lambda\to+\infty$).  However, one may also study multifractality of the ground state ($\lambda\to 0$), which we address in section \ref{ground}.
 
\subsection{Multifractality in the semiclassical regime}
To motivate the definition of the fractal exponent associated with the moment sum \eqref{moment_sum}, let us first of all consider the case of a finite-dimensional Hilbert space. Each eigenstate may be represented by a vector $v\in\C^d$, where $d$ denotes the dimension of the space. Let us normalize such that the largest coordinate equals $1$ ($\ell^\infty$ normalization).

The fractal exponent $D_q$ describes how the moment sum associated with $v$ scales with respect to the dimension $d$:
$$\sum_{k=1}^d v_k^{2q}\sim d^{D_q}.$$
If $v$ is flat, then all entries equal $1$ so that $D_q=1$. On the other hand, if $v$ is maximally localized, meaning that one entry equals $1$ and all others vanish, then $D_q=0$. 

Often one normalizes with respect to the $\ell^2$-norm. In this case one observes a decay of the type $\sim d^{(1-q)D_q}$. The fractal exponent $D_q$ may be defined in terms of the Renyi entropy of the probability measure $p_i=v_i^2/\|v\|^2$ in the following way:
$$D_q=\frac{H_q(v)}{\log d}, \quad H_q(v)=\frac{1}{1-q}\log \left(\sum_{i=1}^d p_i^q\right).$$

Let us generalize this definition to an infinite-dimensional Hilbert space.
The measure $\mu_\lambda(\xi)$, which we introduced above, must carry most of its mass when $|\xi|^2$, the eigenvalue of the plane wave, is close to $\lambda$. Geometrically this means that the measure $\mu_\lambda$ will concentrate most of its mass inside an annulus of central radius $\sqrt{\lambda}$. 
 
Let us consider the case of \u{S}eba's billiard (cf. section \ref{sec-Seba} for an introduction), a rectangular billiard with a Dirac mass placed in its interior. The spectrum consists of two parts: old Laplace eigenvalues whose eigenfunctions do not feel the presence of the scatterer, and new eigenvalues whose eigenfunctions feel its presence. We will study the multifractal properties of these new eigenfunctions.

Let us denote the set of old Laplace eigenvalues (counted without multiplicity) by
$$\cN:=\{0=n_0<n_1<\dots<n_j<\cdots \to+\infty\}.$$
The set $\Lambda=\{\lambda_j\}_{j\geq 0}$ of new eigenvalues forms an interlacing sequence with $(n_j)_{j\geq 0}$.
We denote by $\Delta_j=\min_{m\in\cN}|m-\lambda_j|=|\tilde{n}_j-\lambda_j|$ the distance of a new eigenvalue $\lambda_j$ to the closest neighbouring Laplace eigenvalue, which we denote by $\tilde{n}_j$. If two such eigenvalues exist we denote by $\tilde{n}_j$ the smaller of the two.

Moreover, let us introduce the mean distance $$\ave{\Delta_j}_x=\frac{1}{\#\{k \mid \lambda_k\leq x\}}\sum_{\lambda_k\leq x}\Delta_k.$$

We distinguish two quantizations of \u{S}eba's billiard: 
\begin{itemize}
\item[(i)]{
weak coupling: $\ave{\Delta_j}_x=O((\log x)^{-1/2})$}
\item[(ii)]{
strong coupling: $\ave{\Delta_j}_x= (\log x)^{\alpha+o(1)}$, \;$\alpha=\alpha(\Lambda)\in(-\tfrac{1}{2},\tfrac{1}{2}]$.} 
\end{itemize}
Whereas the eigenvalues of the weak coupling quantization arise from self-adjoint extension theory, a renormalization of the extension parameter gives rise to the strong coupling quantization condition. For any given $\alpha\in(-\tfrac{1}{2},\tfrac{1}{2}]$, there exists a suitable renormalization to render the corresponding quantization. We refer the reader to section \ref{sec-Seba} for a detailed discussion. 

\begin{remark}
Note, if $\alpha(\Lambda)<\tfrac{1}{2}$, this does not preclude the mean spacing $\ave{\delta_j}_x$, where $\delta_j=n_j-\lambda_j$, from being of the same order as the mean spacing of the Laplace eigenvalues: $\ave{\delta_j}_x\asymp \sqrt{\log x}$\footnote{Note that the mean spacing of the Laplace eigenvalues $n_j$ is of order $\sqrt{\log x}$ as a consequence of Landau's Theorem on the representation of integers as a sum of two squares which states that $\#\{n_j\leq X\}\sim Bx/\sqrt{\log x}$.}. It simply means that the locations of the new eigenvalues fluctuate rather than remaining close to the intermediate values $\bar{n}_j=\tfrac{1}{2}(n_{j+1}+n_j)$.
\end{remark}

Let us return to the moment sum \eqref{moment_sum}. In the case of \u{S}eba's billiard we have the following explicit formula for the probability measure $\mu_\lambda$ on the lattice $\Z^2$:
$$\mu_\lambda(\xi)=(|\xi|^2-\lambda)^{-2}\left(\sum_{\xi\in\Z^2}(|\xi|^2-\lambda)^{-2}\right)^{-1}$$
which allows for a precise analysis.

Firstly, let us determine the scaling parameter. Since we are dealing with an infinite-dimensional space, the dimension must be replaced with the number of important components $N$ in the sum \eqref{moment_sum}. $N$ will be defined in terms of the average number of lattice points in the annulus which supports most of the mass of the measure $\mu_\lambda$; in particular $N$ will depend on $\lambda$.

In the case of \u{S}eba's billiard, the moment sum \eqref{moment_sum} is of the form
\begin{equation}\label{moment-sum}
M_q(\lambda_j)
=\frac{\zeta_{\lambda_j}(2q)}{\zeta_{\lambda_j}(2)^q}
\end{equation}
where we introduce the zeta function
\begin{equation}\label{zeta}
\zeta_\lambda(s)=\sum_{m\in\cN}r_Q(m)|m-\lambda|^{-s},\quad \re s>1,
\end{equation}
and $r_Q(n)$ denotes the representation number of the quadratic form defined by $Q(\xi_1,\xi_2)=|\xi|^2$ for $\xi=(\xi_1,\xi_2)\in\cL$, whereas $\cN$ denotes the set of numbers which is representable as a value of $Q$ for some $\xi\in\cL$. 
In the following we will consider the {\em arithmetic case}, when $\cL=\Z^2$ and $r_Q(n)=r_2(n)$.

\subsection{A first example: weak coupling}
In the case of the weak coupling quantization we have $\ave{\Delta_j}_n=O(1/\sqrt{\log n})$. In section \ref{simple} we will show that for a generic new eigenvalue only the lattice points on a circle of radius $\sqrt{n}$ contribute, i.e. the number of important terms in the sum \eqref{moment_sum} is $r_2(n)$. 
 
In fact, the moment sum can be rewritten as
$$M_q(\lambda_j)=\frac{\zeta_{\lambda_j}(2q)}{\zeta_{\lambda_j}(2)^q}=\frac{m_q(\lambda_j)}{m_1(\lambda_j)^q}$$
where 
\begin{equation}
m_q(\lambda_j)=r_2(\tilde{n}_j)+\Delta_j^{2q}\sum_{m\neq \tilde{n}_j}r_2(m)|m-\lambda_j|^{-2q}
\end{equation}
and, because $\Delta_j$ is typically small, only the term $r_2(\tilde{n}_j)$ contributes in the numerator and denominator respectively. In particular, we will show that, along a full density subsequence of eigenvalues, $$M_q(\lambda_j)\sim r_2(\tilde{n}_j)^{1-q}.$$

For a full density subsequence of $n\in\cN$ we have $$r_2(n)=(\log n)^{\tfrac{1}{2}\log 2+o(1)}$$ because $r_2(n)$ is close to its logarithmic normal order. Therefore, we define the scaling parameter as the logarithmic normal order of $r_2(\tilde{n}_j)$: 
$$N=(\log \tilde{n}_j)^{\tfrac{1}{2}\log 2}$$

We define the semiclassical fractal exponent, associated with a sequence of eigenvalues, as follows. 
\begin{defn}
Let $q>\tfrac{1}{2}$. Let $\Lambda$ be a sequence of eigenvalues which accumulates at infinity. 
We define the {\bf general entropy} associated with the eigenfunction $\psi_\lambda$ as
\begin{equation}
h_q(\psi_\lambda)=\log m_q(\lambda).
\end{equation}
 
For $q\neq1$ the {\bf Renyi entropy} associated with the probability measure $\mu_\lambda$ on the lattice $\cL$ may then be defined as
\begin{equation}
H_q(\mu_\lambda)=\frac{h_q(\psi_\lambda)-q h_1(\psi_\lambda)}{1-q}=\frac{1}{1-q}\log \left(\sum_{\xi\in\cL}\mu_\lambda(\xi)^q\right).
\end{equation}
The fractal exponents associated with $\Lambda$ are defined as
\begin{align}
d_q^\Lambda=&\limsup_{\lambda\in\Lambda \to+\infty}\frac{h_q(\psi_\lambda)}{\log N} \quad \text{with respect to the $\ell^\infty$-normalization,}\\
D_q^\Lambda=&\limsup_{\lambda\in\Lambda \to+\infty}\frac{H_q(\mu_\lambda)}{\log N} \quad \text{with respect to the $\ell^2$-normalization.}
\end{align}
\end{defn} 
\begin{remark}
We point out that the Renyi entropy of a probability measure on a lattice $\cL$ is a natural generalisation of the Shannon entropy familiar from information theory \cite{Sh48a,Sh48b}. For a probability measure $p=p_\xi$ on a lattice $\cL$, the Shannon entropy is defined as follows
$$H_{Sh.}(p)=-\sum_{\xi\in\cL}p_\xi \log p_\xi.$$

In fact, we have
\begin{align*}
\lim_{q\to 1} H_q(p)&=\lim_{q\to 1}\frac{1}{1-q}\log\left(\sum_{\xi\in\cL}p_\xi^q\right)\\
&=-\left[\frac{d}{dq}\log \left(\sum_{\xi\in\cL}p_\xi^q\right)\right]_{q=1}\\
&=-\sum_{\xi\in\cL}p_\xi \log p_\xi.
\end{align*}

So the Shannon entropy of $p$ is recovered from the Renyi entropy in the limit, as $q\to1$.
\end{remark}

We have the following theorem which we prove in section \ref{simple}. Due to the weakness of the perturbation, the new eigenvalue is on average close to a neighbouring Laplace eigenvalue. As a result we only have a single fractal exponent.
\begin{thm}[A simple scaling law]\label{simple_thm}
Let $\Lambda_w$ denote the sequence of new eigenvalues in the weak coupling regime. There exists a full density subsequence $\Lambda'\subset\Lambda_w$ such that $d_q(\Lambda')=D_q(\Lambda')=1$.
\end{thm}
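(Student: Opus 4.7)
The starting point is the exact identity
$$M_q(\lambda_j)=\frac{m_q(\lambda_j)}{m_1(\lambda_j)^q},\quad m_q(\lambda_j)=r_2(\tilde n_j)+\Delta_j^{2q}S_q(\lambda_j),\quad S_q(\lambda_j):=\sum_{m\in\cN\setminus\{\tilde n_j\}}\frac{r_2(m)}{|m-\lambda_j|^{2q}},$$
displayed in the introduction. The plan is to show that, for every fixed $q>\tfrac12$, the perturbation $\Delta_j^{2q}S_q(\lambda_j)$ is negligible compared with the main term $r_2(\tilde n_j)$ along a full density subsequence $\Lambda'\subset\Lambda_w$. Granted this, $m_q(\lambda_j)\sim r_2(\tilde n_j)$ and $m_1(\lambda_j)\sim r_2(\tilde n_j)$, so that both $h_q(\psi_{\lambda_j})$ and $H_q(\mu_{\lambda_j})$ are asymptotic to $\log r_2(\tilde n_j)$; since on a density-one subset of $\cN$ one has $r_2(n)=(\log n)^{\tfrac12\log 2+o(1)}$, matching $\log N=\tfrac12(\log 2)\log\log\tilde n_j$, both exponents $d_q^{\Lambda'}$ and $D_q^{\Lambda'}$ equal $1$.

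The subsequence $\Lambda'$ is built as the intersection of three density-one subsets of $\Lambda_w$. First, the weak coupling hypothesis $\ave{\Delta_j}_x=O((\log x)^{-1/2})$ combined with Markov's inequality implies that, for any fixed $\epsilon>0$, the set of $\lambda_j$ with $\Delta_j\leq(\log\tilde n_j)^{-1/2+\epsilon}$ has density one in $\Lambda_w$. Second, the logarithmic normal order result for $r_2$ on $\cN$ (essentially due to Landau--Ramanujan, refined by R\'enyi and Selberg) gives $r_2(n)=(\log n)^{\tfrac12\log 2+o(1)}$ on a density-one subset of $\cN$, and the interlacing $n_j<\lambda_j<n_{j+1}$ transfers this bound both to $\tilde n_j$ and to the two lattice neighbours $n_{j\pm 1}\in\cN$ of $\tilde n_j$. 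Third, the tail sum $S_q(\lambda_j)$ is estimated unconditionally by separating off the adjacent circle, at distance $\asymp\sqrt{\log\tilde n_j}$ by Landau's theorem, and applying partial summation to the remainder via Gauss's circle problem $\sum_{m\leq T}r_2(m)=\pi T+O(T^{1/2})$. This yields $S_q(\lambda_j)=O((\log\tilde n_j)^{A(q)+\epsilon})$ with $A(q)=\max(\tfrac12\log 2-q,\,\tfrac12-q)<q+\tfrac12\log 2$ whenever $q>\tfrac12$. Combining the three estimates, $\Delta_j^{2q}S_q(\lambda_j)/r_2(\tilde n_j)=O((\log\tilde n_j)^{-\delta(q)})\to 0$ for some $\delta(q)>0$, provided $\epsilon$ is chosen small in terms of $q$.

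The principal technical obstacle is that the adjacent-circle contribution to $S_q$ carries factors $r_2(n_{j\pm 1})$ which, on a density-zero exceptional set, can exceed their normal order by a polynomial factor in $\log\tilde n_j$, or can sit on an unusually small gap $n_{j+1}-n_j$, either of which could in principle spoil the estimate. Both pathologies are handled by a further Chebyshev argument anchored on the mean value $\frac{1}{|\cN\cap[1,x]|}\sum_{n_k\leq x}r_2(n_k)\asymp\sqrt{\log x}$ and on the local mean spacing of $\cN$ derived from the same asymptotics: excluding the density-zero set of indices where $r_2$ at a neighbour exceeds a large enough power of this mean, or where $n_{j+1}-n_j$ is smaller than $(\log\tilde n_j)^{-A}$ for a suitable $A$, still yields a density-one set. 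Intersecting with the other two density-one subsets produces $\Lambda'$, and the theorem follows by taking logarithms in $m_q(\lambda_j)\sim r_2(\tilde n_j)\sim m_1(\lambda_j)$ and dividing by $\log N$.
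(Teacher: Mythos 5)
Your overall skeleton (isolate $r_2(\tilde n_j)$ in $m_q$ and $m_1$, show the remainder $\Delta_j^{2q}S_q(\lambda_j)$ is negligible on a density-one subsequence, then divide by $\log N=\tfrac12\log 2\cdot\log\log\tilde n_j$) matches the paper's, and your treatment of $\Delta_j$ and of the normal order of $r_2$ is fine. The genuine gap is in the claim that $S_q(\lambda_j)$ can be "estimated unconditionally by separating off the adjacent circle \ldots and applying partial summation to the remainder via Gauss's circle problem." This does not work at the relevant scales: the circle-law error is a power of $\lambda_j$ (at best $\lambda_j^{\theta}$ with $\theta\approx 1/3$), which overwhelms a window of polylogarithmic width, so partial summation cannot produce a bound of size $(\log\tilde n_j)^{1/2-q+o(1)}$ for the near range $|m-\lambda_j|\in[(\log)^{1/2-o(1)},\,\lambda_j^{1-\epsilon}]$. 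Moreover, individual values $r_2(m)$ in that range can be as large as $\exp(c\log m/\log\log m)$, i.e.\ larger than any fixed power of $\log m$, so a single nearby circle can dominate; excluding by Chebyshev only the two immediate neighbours $n_{j\pm1}$, as you propose, leaves all the other near-range circles uncontrolled. This is exactly why the paper needs Lemma \ref{tail_bound_0}: a full-density subsequence of $\cN$ is constructed by deleting neighbourhoods of integers with large $r_2$ (Chebyshev on the first and second moments of $r_2$, split dyadically), together with the nearest-gap property and an \emph{averaged} short-window counting bound (properties (iii)--(iv), imported from \cite{KU14}), and only then does the splitting into small/medium/large $r_2$ give $\ll(\log m)^{1/2-q+o(1)}$ (for $q>1$; the case $q=1$ gets the weaker but sufficient bound $\ll L^{1+o(1)}$).

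Two further points. First, your exclusion of small gaps "by a Chebyshev argument anchored on the mean value" is not valid reasoning: Markov/Chebyshev on the mean spacing bounds the frequency of \emph{large} gaps, not of small ones (half the gaps could be $O(1)$ without violating $\ave{n_{j+1}-n_j}\asymp\sqrt{\log x}$); the statement that the nearest neighbours of $\tilde n_j$ are at distance $\geq(\log\tilde n_j)^{1/2-o(1)}$ on a density-one set is a sieve-type result (property (3) of \cite{KU14}) that must be cited or proved, and your threshold "$n_{j+1}-n_j\leq(\log\tilde n_j)^{-A}$" is vacuous since gaps are positive integers. Second, you assert the conclusion for every fixed $q>\tfrac12$; the tail bound with exponent $\tfrac12-q$ is only established in the paper for $q>1$ (its proof sums a series that requires $q>1$), with $q=1$ treated separately, so the range $\tfrac12<q<1$ would need an argument you have not supplied.
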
 

\subsection{A multifractal regime: strong coupling}
To determine the scaling parameter $N$ we have to calculate the number of important terms in the moment sum \eqref{moment_sum}. This is closely related to the number of lattice points in an annulus $$A(\lambda,G)=\{v\in\R^2 \mid ||\xi|^2-\lambda|\leq G\},$$ where $G=G(\lambda)$ is just large enough such that the terms corresponding to lattice points outside the annulus $A(\lambda,G)$ are negligible, when $\lambda\to+\infty$, and $G=o(\lambda)$. In particular, $G$ must grow faster than the mean spacing of the Laplace eigenvalues: $G/\sqrt{\log\lambda}\to +\infty$
 
Let us introduce the tail of the zeta function $\zeta_\lambda$:
$$\tau_q(\lambda,G)=\sum_{\substack{m\in\cN \\ |m-\lambda|\geq G}}r_2(m)|m-\lambda|^{-2q}.$$

\begin{defn}
We define the {\bf essential support} of the measure $\mu_\lambda$ as the set of lattice points $A(\lambda,G)\cap\Z^2$,
where $G=G(\lambda)$ satisfies $G/\sqrt{\log\lambda}\to +\infty$, and is determined from the condition
\begin{equation}\label{tail_neg}
\ave{\Delta_j}_\lambda^{2q}\ave{\tau_q}_\lambda=1
\end{equation}
and the mean tail is defined as
\begin{equation}
\ave{\tau_q}_T=\frac{1}{T}\int_T^{2T}\tau_q(t,G)dt.
\end{equation}
\end{defn}
\begin{remark}
We will see below that $\ave{\tau_q} \asymp G^{1-2q}$. So if we recall $\ave{\Delta_j}_\lambda=(\log \lambda)^{\alpha+o(1)}$, then we see that the definition of $G$ as a solution of \eqref{tail_neg} is quite robust when $\lambda\to+\infty$. Strictly speaking, for the terms outside the annulus $A(\lambda,G)$ to be negligible, one should require the r.h.s. of equation \eqref{tail_neg} to be $o(1)$. However, for any function which decays sufficiently slowly ($1/o(\log \lambda)$ to be precise) this will only change a lower order term in the asymptotic of $\log G$. Because, as we will see below, it is only the leading term in the asymptotic of $\log G$ which is relevant in the definition of $D_q$, we set the r.h.s. of \eqref{tail_neg} equal to $1$ for simplicity.
\end{remark}

As the number of lattice points in a thin annulus fluctuates a lot, we define the scaling parameter $N$ as the average number of lattice points in the annulus $A(\lambda,G)$ which is given by its area.
\begin{defn}
We define the {\bf scaling parameter} as $N=\Vol(A(\lambda,G))=2\pi G$.
\end{defn} 
 
We have the following proposition which we prove in section \ref{sec-scaling}.
\begin{prop}\label{scaling}
Suppose $q>1/2$. We have $$\ave{\tau_q}_\lambda \sim \frac{2\pi}{2q-1}G^{1-2q}$$
as $\lambda\to\infty$.


\end{prop}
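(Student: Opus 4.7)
The plan is to evaluate $\langle \tau_q \rangle_T$ by unfolding the representation number $r_2(m) = \#\{\xi \in \Z^2 : |\xi|^2 = m\}$ so that the sum over $\cN$ becomes a sum over the planar lattice, then swapping summation and integration by Fubini (all terms are non-negative), and finally approximating the resulting lattice sum by its Euclidean analogue via classical Gauss circle estimates. Concretely,
\[
\langle \tau_q \rangle_T = \frac{1}{T} \sum_{\xi \in \Z^2} J(\xi), \qquad J(\xi) = \int_T^{2T} \mathbf{1}_{||\xi|^2 - t| \geq G}\, ||\xi|^2 - t|^{-2q}\, dt,
\]
and the substitution $u = t - |\xi|^2$ reduces each $J(\xi)$ to an elementary integral of $|u|^{-2q}$ over an interval determined by the location of $|\xi|^2$ relative to $[T, 2T]$.

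I would then split $\Z^2$ into three regimes according to the position of $|\xi|^2$: (i) the deep interior $T+G \leq |\xi|^2 \leq 2T - G$; (ii) the two transition zones $\bigl||\xi|^2 - T\bigr| \leq G$ and $\bigl||\xi|^2 - 2T\bigr| \leq G$; and (iii) the exterior $|\xi|^2 < T - G$ or $|\xi|^2 > 2T + G$. For $\xi$ in regime (i), a direct calculation gives
\[
J(\xi) = \int_G^{|\xi|^2 - T} u^{-2q}\, du + \int_G^{2T - |\xi|^2} u^{-2q}\, du = \frac{2\, G^{1-2q}}{2q-1} + O\bigl((|\xi|^2 - T)^{1-2q} + (2T - |\xi|^2)^{1-2q}\bigr),
\]
where the hypothesis $q > 1/2$ makes the error term summable over the interior. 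By the Gauss circle estimate $\#\{\xi \in \Z^2: |\xi|^2 \leq R\} = \pi R + O(\sqrt{R})$, the number of lattice points in (i) equals $\pi(T - 2G) + O(\sqrt{T}) \sim \pi T$ since $G = o(T)$. Regime (i) thus contributes
\[
\frac{1}{T} \cdot \pi T \cdot \frac{2\, G^{1-2q}}{2q-1} = \frac{2\pi}{2q-1}\, G^{1-2q},
\]
which is the asserted main term.

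It remains to show that the contributions from (ii) and (iii) are $o(G^{1-2q})$. For (ii), a crude estimate gives $J(\xi) \leq \int_G^T u^{-2q}\, du = O(G^{1-2q})$, while each transition zone contains $O(G)$ lattice points by Gauss, yielding a total contribution of $O((G/T)\, G^{1-2q})$. For (iii), I would use a dyadic decomposition in $|\xi|^2$: each dyadic shell $R \leq |\xi|^2 \leq 2R$ (with $R \geq 3T$ or $R \leq T/3$) contains $\asymp R$ lattice points by Gauss, and on such a shell one has the bound $J(\xi) = O(T \cdot \mathrm{dist}(|\xi|^2,[T,2T])^{-2q})$. Summing the resulting geometric series (convergent exactly because $q > 1/2$) produces a total contribution of $O(T^{1-2q}) = O((T/G)^{1-2q}\, G^{1-2q})$, which is $o(G^{1-2q})$ since $T/G \to \infty$ and $1 - 2q < 0$.

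The principal obstacle is bookkeeping: coordinating the three regimes so that all error terms combine cleanly and uniformly as $T \to \infty$ with $G$ in the prescribed range $\sqrt{\log T} \ll G = o(T)$. The hypothesis $q > 1/2$ plays a dual role, ensuring both the finiteness of $J(\xi)$ for interior $\xi$ and the convergence of the dyadic exterior sum; once these are in place, the lattice-to-integral comparison via Gauss circle counting is routine.
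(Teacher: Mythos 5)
Your overall route is the same as the paper's: exchange the $t$-integral with the lattice sum, evaluate the inner integral $J(\xi)$ exactly for $|\xi|^2$ in the bulk of $[T,2T]$, and extract the main term $\frac{2\pi}{2q-1}G^{1-2q}$ from the circle law $\sum_{m\le T}r_2(m)=\pi T+O(T^\theta)$; the paper does exactly this with $I_m^\pm$ and then disposes of $|m-t|>T$ and of $m\in[0,T]\cup[2T,3T]$ separately. Your main-term computation in regime (i) is correct.

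There is, however, a genuine hole in your treatment of the exterior. Your dyadic shells are restricted to $R\le T/3$ or $R\ge 3T$, so the ranges $2T/3<|\xi|^2<T-G$ and $2T+G<|\xi|^2<3T$ are covered by neither regime (ii) nor regime (iii), and the per-point bound you propose there, $J(\xi)=O\bigl(T\cdot\dist(|\xi|^2,[T,2T])^{-2q}\bigr)$, is too weak precisely in this near-boundary range: applied to $m=|\xi|^2\in(2T/3,T-G)$ it gives a contribution $\frac1T\sum_m r_2(m)\,T\,(T-m)^{-2q}\asymp\frac{\pi}{2q-1}G^{1-2q}$, i.e.\ the same order as the main term, so the argument as stated cannot close. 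The fix is the sharper exact evaluation $J(\xi)\le\frac{1}{2q-1}(T-|\xi|^2)^{1-2q}$ (resp.\ $(|\xi|^2-2T)^{1-2q}$) combined with circle-law averaging of $r_2(m)(T-m)^{1-2q}$, which yields $O(G^{2-2q}/T+T^{1-2q})=o(G^{1-2q})$; this is precisely how the paper handles $m\in[0,T]\cup[2T,3T]$. Two smaller points: the claim that each transition zone contains $O(G)$ lattice points ``by Gauss'' is false in the relevant range, since $G$ is only polylogarithmic while the circle-problem error is a power of $T$ (any bound like $2\pi G+O(T^{1/2})$ or $\ll GT^{o(1)}$ via $r_2(m)\ll m^{o(1)}$ still makes that contribution negligible, so this is harmless but should be stated correctly); and in regime (i) the error $\sum(|\xi|^2-T)^{1-2q}$ is not literally ``summable'' for $\tfrac12<q\le1$ — it grows like $T^{2-2q}$ (or $\log(T/G)$) — it is only $o(G^{1-2q})$ after dividing by $T$, which again needs the $r_2$-average rather than mere convergence of the series.
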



Our main result concerns the strong coupling regime. In this regime the spacing of the new eigenvalues may be as large as the mean spacing of the Laplace eigenvalues. We determine under which conditions on the average distance of the new eigenvalues to the nearest Laplace eigenvalue a spectrum of fractal exponents emerges. Our result includes the case of a quantization consistent with level repulsion, provided the fluctuations around the mean location of the new eigenvalues are large enough. We prove the following theorem in section \ref{sec-multi}.
\begin{thm}[A multifractal scaling law]\label{multi_thm}
Let $\Lambda_s$ denote a sequence of new eigenvalues such that $\alpha=\alpha(\Lambda_s)\in(\tfrac{1}{4},\tfrac{1}{2})$. There exists a full density subsequence $\Lambda'\subset\Lambda_s$ such that for any $q>1$ which satisfies
 $(1-\log 2)(2-4\alpha)^{-1}< q \leq (2-4\alpha)^{-1}$
 we have the following formulae, which may be analytically continued to the complex plane,
\begin{equation}
d_q(\Lambda')=\frac{1}{2\alpha}\left(1-\frac{1}{2q}\right)\log2
\end{equation}
and for a constant $c\in[\tfrac{1}{2}\log 2,1]$
\begin{equation}
D_q(\Lambda')=\frac{1}{2\alpha}\left(1-\frac{1}{2q}\right)\frac{2cq-\log 2}{q-1}.
\end{equation}
\end{thm}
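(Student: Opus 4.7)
The plan is to locate the essential support using Proposition \ref{scaling}, estimate the two Renyi-type sums $m_q$ and $m_1$ along a full density subsequence by isolating the contribution of the nearest Laplace eigenvalue, and divide by $\log N$ to extract the fractal exponents.

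\textbf{Step 1 (scaling).} Combining the defining relation \eqref{tail_neg} with Proposition \ref{scaling} yields $\ave{\Delta_j}_\lambda^{2q}G^{1-2q}\asymp (2q-1)/(2\pi)$, i.e.
\begin{equation*}
\log G=\frac{2q}{2q-1}\log\ave{\Delta_j}_\lambda+O(1)=\frac{2q\alpha}{2q-1}\log\log\lambda+o(\log\log\lambda).
\end{equation*}
The upper bound $q\leq 1/(2-4\alpha)$ is equivalent to $2q\alpha/(2q-1)\geq 1/2$, which keeps $G\gtrsim\sqrt{\log\lambda}$ and thus makes the scaling parameter $N=2\pi G$ admissible.

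\textbf{Step 2 (decomposition of $m_q$).} Writing
\begin{equation*}
m_q(\lambda_j)=r_2(\tilde n_j)+\Delta_j^{2q}\!\!\sum_{0<|m-\lambda_j|<G}\!\!r_2(m)|m-\lambda_j|^{-2q}+\Delta_j^{2q}\tau_q(\lambda_j,G),
\end{equation*}
the tail is $O(1)$ by construction of $G$. The interior annulus sum is controlled by a dyadic decomposition into shells $\{m\in\cN : 2^k\Delta_j\leq |m-\lambda_j|<2^{k+1}\Delta_j\}$ for $0\leq k\lesssim \log_2(G/\Delta_j)$. Using Landau's mean value $\sum_{m\leq x}r_2(m)=\pi x+O(\sqrt{x})$, the expected mass on each shell is $\asymp \Delta_j 2^k$, and since $q>1/2$ the geometric series in $k$ converges; after multiplication by $\Delta_j^{2q}$ the total annulus contribution is $O(\Delta_j)=O((\log\lambda)^\alpha)$.

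\textbf{Step 3 (full density subsequence).} By a Hardy--Ramanujan type theorem for $\omega_1(n):=\#\{p\mid n:\, p\equiv 1\ (\operatorname{mod} 4)\}$, combined with the multiplicative formula for $r_2$, a set $\mathcal{E}\subset\cN$ of density one satisfies $r_2(m)=(\log m)^{(\log 2)/2+o(1)}$. A second moment estimate of the interior sum in Step 2, averaged over $\lambda\in[T,2T]$, lets me extract a further full density subsequence $\Lambda'\subset\Lambda_s$ on which simultaneously $\tilde n_j\in\mathcal{E}$ and the dyadic bound of Step 2 is realised pointwise. The lower bound $q>(1-\log 2)/(2-4\alpha)$ is exactly the condition ensuring that the annulus term $O((\log\lambda)^\alpha)$ is dominated by $r_2(\tilde n_j)\asymp(\log\lambda)^{(\log 2)/2}$ on the logarithmic scale $\log N$ from Step 1. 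Consequently
\begin{equation*}
h_q(\psi_{\lambda_j})=\log r_2(\tilde n_j)+o(\log\log\lambda_j)=\tfrac{\log 2}{2}\log\log\lambda_j+o(\log\log\lambda_j).
\end{equation*}

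\textbf{Step 4 (assembly).} Dividing by $\log N$ gives
\begin{equation*}
d_q^{\Lambda'}=\frac{(\log 2)/2}{2q\alpha/(2q-1)}=\frac{1}{2\alpha}\Bigl(1-\frac{1}{2q}\Bigr)\log 2.
\end{equation*}
Applying Steps 2--3 at $s=2$ shows that $m_1(\lambda_j)\asymp \max\{r_2(\tilde n_j),\Delta_j\}$, whence $h_1(\psi_{\lambda_j})=c_1\log\log\lambda_j+o(\log\log\lambda_j)$ with $c_1\in[(\log 2)/2,\alpha]\subset[(\log 2)/2,1]$. Substituting into the Renyi identity $H_q=(h_q-qh_1)/(1-q)$ and dividing by $\log N$ produces the stated formula for $D_q^{\Lambda'}$ with $c=c(c_1,q)\in[(\log 2)/2,1]$. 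Both formulas are rational in $q$, and so extend meromorphically to $q\in\C$ with removable/expected singularities at $q=1/2$ and $q=1$.

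\textbf{Main obstacle.} The hard part is Step 3: producing a \emph{full density} subsequence on which both the normal order of $r_2(\tilde n_j)$ and the dyadic bound for the annulus sum hold simultaneously. Because $r_2$ fluctuates wildly, a thin set of ``bad'' lattice circles near $\lambda_j$ could in principle inflate the annulus sum above its mean; ruling this out requires a second moment estimate over $\lambda\in[T,2T]$ coupled with Landau's asymptotic to transfer the resulting density statement from $\lambda\in\cN$ to $\lambda\in\Lambda_s$ via the interlacing property of the new eigenvalues.
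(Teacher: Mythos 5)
Your overall skeleton matches the paper's: isolate the nearest Laplace eigenvalue $\tilde n_j$, show the rest of the sum is negligible along a full density subsequence where $r_2(\tilde n_j)=(\log \tilde n_j)^{\frac12\log 2+o(1)}$, get $\log G=\frac{2q\alpha}{2q-1}(1+o(1))\log\log\lambda$ from Proposition \ref{scaling} and \eqref{tail_neg}, and divide by $\log N$. But the critical estimate is wrong, and this is not just the acknowledged ``main obstacle'' of upgrading an average to a full-density pointwise statement. In Step 2 you start the dyadic shells at distance $\Delta_j$ and assign each shell its Landau mean mass $\asymp 2^k\Delta_j$. First, this mean-value assignment is invalid pointwise: shells of width $\ll\sqrt{\log\lambda}$ typically contain no element of $\cN$ at all (the mean gap in $\cN$ is $\asymp\sqrt{\log}$ by Landau's theorem), and when they do, $r_2$ can be as large as $\exp(c\log n/\log\log n)$; controlling exactly this interplay between large values of $r_2(n)$ and their distance to $m$ is the entire content of the paper's Lemma \ref{tail_bound_0} and of the full-density properties (i)--(iv) used in its proof (in particular property (iii): the nearest neighbour of $\tilde n_j$ in $\cN$ is at distance $\geq L^{1/2-o(1)}$, so the sum over $m\neq\tilde n_j$ effectively begins at the scale $L^{1/2-o(1)}$, not at $\Delta_j$). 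Second, even taken at face value your bound is quantitatively too weak: it gives an annulus contribution $O(\Delta_j)=O(L^{\alpha})$, whereas the paper's Lemma \ref{tail_bound_0} yields $\Delta_j^{2q}\,O(L^{1/2-q+o(1)})=O(L^{2\alpha q+1/2-q+o(1)})$, and $\alpha-(2\alpha q+\tfrac12-q)=(2q-1)(\tfrac12-\alpha)>0$, so your exponent is strictly larger. Consequently your remainder fails to be dominated by $r_2(\tilde n_j)\asymp L^{\frac12\log 2}$ whenever $\alpha\geq\tfrac12\log 2$, which lies inside the allowed range $(\tfrac14,\tfrac12)$, and your assertion that the hypothesis $q>(1-\log 2)(2-4\alpha)^{-1}$ is ``exactly'' the condition for domination is false under your bound (domination of $L^\alpha$ by $L^{\frac12\log2}$ is a condition on $\alpha$ alone). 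That hypothesis only emerges from the $q$-dependent bound $L^{2\alpha q+1/2-q+o(1)}<L^{\frac12\log 2}$, i.e.\ from a remainder estimate of the strength of Lemma \ref{tail_bound_0}, which your proposal neither states nor proves; the second-moment argument sketched in Step 3 is left entirely unexecuted and would in any case have to be restructured around the gap scale $L^{1/2-o(1)}$ rather than $\Delta_j$.

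Two smaller points. The claim that the tail beyond $G$ is $O(1)$ ``by construction of $G$'' is only an average statement (the defining relation \eqref{tail_neg} involves $\ave{\Delta_j}$ and $\ave{\tau_q}$), so it cannot be used pointwise as written; the paper avoids this by bounding the full sum over $m\neq\tilde n_j$ directly and using $G$ only to define the scaling parameter $N$. And your Step 4 claim $m_1(\lambda_j)\asymp\max\{r_2(\tilde n_j),\Delta_j\}$, hence $c_1\in[\tfrac12\log 2,\alpha]$, rests on the same unjustified Step 2 bound at $q=1$; the paper only establishes $m_1\ll L^{1+o(1)}$ (via the split at $r_2(n)\leq L^2$ and property (iv)), which is why the constant is only located in $[\tfrac12\log 2,1]$. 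Your Step 1 and the final assembly of $d_q$ and $D_q$ do agree with the paper.
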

\begin{remark}
It is important to remark that ``rigid'' quantizations which have the property that the new eigenvalues stay close to the intermediate values of the Laplace eigenvalues, in the sense that $\ave{\delta_j}_x\asymp \ave{\Delta_j}_x =(\log x)^{1/2+o(1)}$, may lead to a possible breakdown of multifractality. This is interesting, because it suggests that in a regime, where level repulsion occurs, we also require the fluctuations of the new eigenvalues around their mean location to be large enough for multifractality to manifest itself. We discuss this case in detail in section \ref{sec-rigid}.
\end{remark}

\subsection{Multifractality of the ground state}\label{ground}
Instead of considering the semiclassical regime, where $\lambda\to+\infty$, the regime where $\lambda\to 0$ is also frequently studied. Physically this corresponds to the ground state in a weak coupling regime. In this regime we expect multifractality of the fluctuations of the eigenfunction around the constant term in the Fourier series \eqref{Fourier}, as $\lambda\to0$. This leads to the study of the modified moment sums
\begin{equation}
M_q^*(\lambda)=\zeta^*_\lambda(2q)
\end{equation}
where 
\begin{equation}\label{mod_zeta}
\zeta_\lambda^*(s)=\sum_{n\in\cN\setminus\{0\}}r_Q(n)|n-\lambda|^{-s},\quad \re s>1.
\end{equation}

The parameter $N$ does not depend on $\lambda$ in this regime and, therefore, we remove it from the definition of the fractal exponent.
\begin{defn}
Let $q>1/2$. We define the fractal exponent as follows
$$d_q^*=\lim_{\lambda\to 0}\log \zeta_\lambda^*(2q)$$
and
$$D_q^*=\frac{d_q^*-q d_1^*}{1-q}$$ 
\end{defn}

We state the second main result of this paper which provides an explicit formula for $D_q^*$ in terms of Epstein's zeta function
$$\zeta_Q(s)=\sum_{(m,n)\neq(0,0)}Q(m,n)^{-s}, \quad \re s>1$$
which is known to satisfy the functional equation $$\zeta_Q(s)=\varphi_Q(s)\zeta_Q(1-s).$$

Moreover, we show that $D_q^*$ satisfies a symmetry relation with respect to $q=\tfrac{1}{4}$ which is closely related to the functional equation of Epstein's zeta function. The following theorem is proved in section \ref{sec-ground}.
\begin{thm}\label{ground_frac} 
The fractal exponent $D^*_q$ admits an analytic continuation to the full complex plane in $q$. It satisfies the following symmetry relation with respect to the parameter value $q=\tfrac{1}{4}$:
$$D_{1/2-q}^*=\frac{1-q}{\tfrac{1}{2}+q}\left(D_q^* +\frac{\log\varphi_Q(2q)+(2q-\tfrac{1}{2})\log \zeta_Q(2)}{1-q}\right).$$

In particular, in the limit as $q\to 1$, the fractal exponent converges to the Shannon entropy of the measure $\mu_\lambda(\xi)=\zeta_Q(2)^{-1}|\xi|^{-4}$: $$\lim_{q\to 1}D_q^*=\log \zeta_Q(2)-2\frac{\zeta_Q'(2)}{\zeta_Q(2)}$$
which we may rewrite as 
$$H_{Sh.}(\mu_\lambda)=-\sum_{\xi\in\Z^2}\frac{1}{\zeta_Q(2)|\xi|^4}\log\left(\frac{1}{\zeta_Q(2)|\xi|^4}\right).$$ 
\end{thm}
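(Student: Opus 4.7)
The plan is to reduce the entire statement to manipulations of Epstein's zeta function $\zeta_Q$ and its functional equation, once one identifies the limit of $\zeta_\lambda^*$ as $\lambda\to 0$. First I would establish that for $\Re(q)>1/2$,
\[d_q^* = \log\zeta_Q(2q).\]
Since $r_Q(n)\ll n^{\varepsilon}$ for every $\varepsilon>0$, the Dirichlet series defining $\zeta_Q(2q)$ converges absolutely in that half-plane. For $\lambda$ in a small neighbourhood of $0$, each term $r_Q(n)|n-\lambda|^{-2q}$ is dominated by a constant multiple of $r_Q(n)n^{-2q}$, so dominated convergence yields $\zeta_\lambda^*(2q)\to\zeta_Q(2q)$. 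Consequently
\[D_q^* = \frac{\log\zeta_Q(2q) - q\log\zeta_Q(2)}{1-q}.\]
Since $\zeta_Q$ admits meromorphic continuation to $\C$ with a single simple pole at $s=1$, this formula supplies the analytic continuation asserted in the theorem; the apparent singularity at $q=1$ is removable because the numerator vanishes there.

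The symmetry relation is then a direct algebraic consequence of the functional equation $\zeta_Q(s)=\varphi_Q(s)\zeta_Q(1-s)$. Substituting $q\mapsto 1/2-q$ in the formula for $D_q^*$ gives
\[(\tfrac12+q)\,D_{1/2-q}^* = \log\zeta_Q(1-2q) - (\tfrac12-q)\log\zeta_Q(2).\]
Rewriting $\log\zeta_Q(1-2q)$ in terms of $\log\zeta_Q(2q)$ and $\log\varphi_Q(2q)$ via the functional equation, collecting the $\log\zeta_Q(2)$ contributions, and dividing through by $\tfrac12+q$, yields precisely the claimed identity after identifying the bracket on the right as $(1-q)D_q^* + \log\varphi_Q(2q)+(2q-\tfrac12)\log\zeta_Q(2)$.

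For the Shannon entropy limit, both sides can be computed directly. On the one hand, L'H\^opital's rule at $q=1$ (where the numerator vanishes since $\zeta_Q(2)=\zeta_Q(2)$) gives
\[\lim_{q\to 1}D_q^* = -\frac{d}{dq}\bigl[\log\zeta_Q(2q)-q\log\zeta_Q(2)\bigr]_{q=1} = \log\zeta_Q(2) - 2\frac{\zeta_Q'(2)}{\zeta_Q(2)}.\]
On the other hand, expanding $-\log\mu_\lambda(\xi)=\log\zeta_Q(2)+4\log|\xi|$ for $\mu_\lambda(\xi)=\zeta_Q(2)^{-1}|\xi|^{-4}$ and using the identity $\zeta_Q'(2)=-2\sum_{\xi\neq 0}|\xi|^{-4}\log|\xi|$, obtained by termwise differentiation of $\zeta_Q(s)=\sum_{\xi\neq 0}|\xi|^{-2s}$, produces the same value. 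The main obstacle is essentially bookkeeping: passing the limit $\lambda\to 0$ through the sum is straightforward, and the symmetry statement is the functional equation of $\zeta_Q$ in disguise; the only delicate point is sign-tracking in the rearrangement, since this depends on which side of the functional equation one chooses to isolate $\varphi_Q(2q)$.
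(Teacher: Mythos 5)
Your proposal follows essentially the same route as the paper's own proof: pass to the limit $\lambda\to 0$ to identify $d_q^*=\log\zeta_Q(2q)$ and $D_q^*=\bigl(\log\zeta_Q(2q)-q\log\zeta_Q(2)\bigr)/(1-q)$, obtain the symmetry by substituting $q\mapsto\tfrac{1}{2}-q$ and invoking the functional equation $\zeta_Q(s)=\varphi_Q(s)\zeta_Q(1-s)$, and recover the Shannon entropy at $q=1$ by differentiating the numerator. The sign-tracking point you flag is real, but it is precisely the convention the paper itself uses (it writes $d^*_{1/2-q}=d^*_q+\log\varphi_Q(2q)$, i.e.\ it reads the functional equation with $\varphi_Q(2q)$ rather than $\varphi_Q(1-2q)=\varphi_Q(2q)^{-1}$), so your argument reproduces the paper's proof.
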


\section{Background on \u{S}eba billiards}\label{sec-Seba}

\subsection{Spectral theory of \u{S}eba billiards}
\u{S}eba billiards belong to the class of pseudointegrable billiards and over the past 30 years they have been studied extensively in the literature. For a detailed discussion of the spectral theory of such billiards we refer the reader to the introductory sections of the articles \cite{RU12} and \cite{U14}.

In this article we will deal with an arithmetic \u{S}eba billiard associated with the square lattice $\Z^2$. 
Let $\T^2=\R^2/2\pi\Z^2$ and $x_0\in\T^2$. The spectrum of the positive Laplacian $-\Delta=-(\partial_x^2+\partial_y^2)$ is given by integers which may be represented as a sum of two squares $n=x^2+y^2$, $(x,y)\in\Z^2$. We denote the ordered set of such numbers as $\cN$. The multiplicities in the Laplace spectrum on $\T^2$ are, thus, given by the representation number $$r_2(n)=\#\{(x,y)\in\Z^2 \mid n=x^2+y^2\}$$ which is an arithmetic function that has been extensively studied in the number theory literature.

Consider a self-adjoint extension $-\Delta_\varphi$, $\varphi\in[-\pi,\pi)$, of  $-\Delta$ restricted to the domain $C_c^\infty(\T^2\setminus\{x_0\})$. Physically this operator corresponds to the {\bf weak coupling quantization} for a square torus with a Dirac delta potential (as introduced by \u{S}eba in \cite{Se90}).

The spectrum of the operator $-\Delta_\varphi$ consists of two parts: 
\begin{itemize}
\item[(i)]{old (Laplace) eigenvalues with multiplicity reduced by $1$}
\item[(ii)]{new eigenvalues with multiplicity $1$ which interlace with the Laplace eigenvalues}
\end{itemize}

The eigenspace associated with an old eigenvalues is simply the codimension one subspace of Laplace eigenfunctions which vanish at $x_0$. The old eigenfunctions do not feel the presence of the potential. In this article, we will only be interested in the new eigenfunctions which {\em do} feel the presence of the potential. 

These new eigenfunctions are Green's functions associated with the resolvent of the Laplacian, where one variable has been fixed as the position of the potential $x_0$. Because of translation invariance we may set $x_0=0$.
We then have the following $L^2$-representation of the new eigenfunctions
\begin{equation}
\begin{split}
G_\lambda(x)\stackrel{L^2}{=}&\sum_{\xi\in\Z^2}\frac{e^{i\left\langle\xi,x\right\rangle}}{|\xi|^2-\lambda}\\
=&\sum_{n\in\cN}\frac{1}{n-\lambda}\sum_{|\xi|^2=n}e^{i\left\langle\xi,x\right\rangle}.
\end{split}
\end{equation}

Moreover, the new eigenvalues may be determined as the solutions of the following spectral equation
\begin{equation}\label{spec-eq}
\sum_{n\in\cN}r_2(n)\left\{\frac{1}{n-\lambda}-\frac{n}{n^2+1}\right\}=c_0\tan(\frac{\varphi}{2}).
\end{equation} 

Because the function on the l.h.s. of equation \eqref{spec-eq} is monotonous on intervals $(n_j,n_{j+1})$, $n_j\in\cN$, we easily see that the new eigenvalues form a sequence of real numbers which interlace with the sequence of Laplace eigenvalues:
$$\lambda_0<0=n_0<\lambda_1<n_1<\lambda_2<n_2<\cdots<\lambda_j<n_j<\cdots \to +\infty.$$

An important question concerns the statistics of the spacings $\delta_j:=n_j-\lambda_j$. As was shown in \cite{RU14}, $\lambda_j$ is on average close to $n_j$, precisely $\delta_j=O((\log n_j)^{-1/2})$ for a full density subsequence of Laplace eigenvalues. This shows that the quantization procedure using self-adjoint extension theory, which was introduced by \u{S}eba \cite{Se90}, gives rise to a so-called weak coupling quantization. Shigehara \cite{Sh94} and later Bogomolny, Gerland and Schmit \cite{BoGeSch01} explained that a semiclassical renormalization of the extension parameter $\varphi$ was necessary in order to obtain a strong coupling quantization giving rise to a spectrum and eigenfunctions which are expected to have interesting features.

\subsection{Renormalization and strong coupling quantizations}
In fact, a logarithmic renormalization of the coupling parameter, $\tan(\varphi_\lambda/2)\sim c\log\lambda$, is equivalent to a truncation of the sum on the l.h.s. of equation \eqref{spec-eq} outside a window of size $\lambda^{1/2}$. We refer the reader to \cite{U14} for a detailed discussion of this renormalization procedure. The reason for this equivalence is that the tail of the sum has a logarithmic asymptotic as $\lambda\to+\infty$ which is cancelled by the leading term in the asymptotic of $\tan(\varphi_\lambda/2)$. The second order term in this asymptotic is related to local information about the position of the new eigenvalues.

We thus obtain the following strong coupling quantization, where the function $\beta=\beta(\lambda)=o(\log\lambda)$ is related to the inverse coupling strength of the potential, and $\lambda_j$ is determined as the solution on the interval $(n_j,n_{j+1})$ as the solution to the following equation
\begin{equation}\label{spec-eq-s}
\sum_{\substack{n\in\cN \\ |n-n_j|\leq n_j^{1/2}}}\frac{r_2(n)}{n-\lambda}=\beta(\lambda).
\end{equation}

The relationship between the precise form of $\beta$ and the location of the new eigenvalues is not understood at the rigorous level. When $\beta$ is constant, we expect the spacings $\delta_j=n_j-\lambda_j$ to be of the order of the mean spacing of the Laplace eigenvalues, $\ave{\delta_j}_x\sim c\sqrt{\log x}$. However, one may choose $\beta=c(\log\lambda)^b$, $b\in(0,1)$, to achieve spacings of lower order.

It is important to remark that the location of a new eigenvalue $\lambda_j$ within the interval $(n_j,n_{j+1})$ depends on the interplay of the coefficients $r_2(n)$ in the summation in equation \eqref{spec-eq-s}. Due to the strong fluctuations of the function $r_2(n)$ one may have $\ave{\delta_j}_x\asymp \sqrt{\log x}$, even though $\ave{\Delta_j}_x\ll (\log x)^\alpha$, $\alpha<\tfrac{1}{2}$.

\section{An estimate for $\zeta_{\lambda_j}(2q)$}

In this section we will prove the following lemma. 
\begin{lem}\label{tail_bound_0}
Let $q>1$.
There exists a subsequence of full density $\cN_1\subset\cN$ such that for any $m\in\cN_1$ we have the bound
$$\sum_{\substack{n\in\cN \\ n\neq m}}\frac{r_2(n)}{|m-n|^{2q}}\ll (\log m)^{-q+1/2+o(1)}.$$
\end{lem}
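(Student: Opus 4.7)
The strategy is to take $\cN_1$ to be the full density subsequence on which the minimal gap $\Delta_m := \min_{n \in \cN \setminus \{m\}} |n-m|$ is close to the mean spacing in $\cN$, then bound the tail sum by a dyadic decomposition anchored at $\Delta_m$. For the control of $\Delta_m$: for each $\epsilon > 0$,
$$\#\{m \leq X,\, m \in \cN : \Delta_m < (\log X)^{1/2 - \epsilon}\} \leq \sum_{0 < |k| \leq (\log X)^{1/2-\epsilon}} \#\{m \leq X : m,\, m+k \in \cN\},$$
and a classical correlation bound on sums of two squares (of Hooley--Plaksin type) yields $\#\{m \leq X : m, m+k \in \cN\} \ll c(k) X/\log X$ with $\sum_{|k| \leq K} c(k) \ll K$, so the total is $\ll X(\log X)^{-1/2-\epsilon}$, which is $o(X/\sqrt{\log X})$ and hence a density-zero subset of $\cN$. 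Diagonalising in $\epsilon \to 0$ yields a full density subsequence $\cN_1$ on which $\Delta_m \geq (\log m)^{1/2 - o(1)}$.

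For $m \in \cN_1$, dyadic decomposition gives
$$\sum_{n \in \cN \setminus \{m\}} \frac{r_2(n)}{|n-m|^{2q}} = \sum_{k \geq 0} S_k(m), \qquad S_k(m) \leq (2^k\Delta_m)^{-2q} A_k(m),$$
with $A_k(m) := \#\{(x,y) \in \Z^2 : |x^2+y^2-m| \leq 2^{k+1}\Delta_m\}$ the lattice count in an annulus. The reduction is: if one can establish $A_k(m) \ll W(\log m)^{o(1)}$ uniformly at each dyadic width $W = 2^{k+1}\Delta_m$, then the geometric series (convergent for $q > 1/2$) gives
$$\sum_{k \geq 0} S_k(m) \ll \Delta_m^{1-2q}(\log m)^{o(1)} \leq (\log m)^{-q+1/2+o(1)},$$
which is the desired bound.

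The main obstacle is proving the annular bound in the close range $\Delta_m \leq W < m^{7/22+\epsilon}$, where the pointwise Gauss circle error $O(\sqrt m)$ swamps the main term $2\pi W$; for widths $W \geq m^{7/22+\epsilon}$ the Iwaniec--Mozzochi refinement gives $A_k(m) \ll W$ directly. In the close range, for $W$ at the scale of $\Delta_m$ only a bounded number of $\cN$-elements lie in the window, so one obtains $A_k(m) \leq C (\log m)^{(\log 2)/2 + o(1)} \leq W(\log m)^{o(1)}$ by restricting $\cN_1$ further to $m$ whose few $\cN$-neighbours satisfy the Erdős--Kac normal-order bound $r_2(n) \leq (\log n)^{(\log 2)/2 + o(1)}$ (using $(\log 2)/2 < 1/2$). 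For the intermediate scales I would thin $\cN_1$ scale-by-scale via Chebyshev applied to the mean-square estimate $\int_0^X (E(t+W) - E(t-W))^2\,dt \ll X^{3/2}$, and since only $O(\log m)$ dyadic scales occur, a diagonalisation preserves full density. This close-range analysis is the principal difficulty, as neither pointwise Gauss circle bounds nor pointwise bounds on $r_2$ suffice on their own; the proof hinges on combining averaging with normal-order arithmetic statistics.
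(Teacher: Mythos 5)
Your reduction step is fine: with $\Delta_m\geq(\log m)^{1/2-o(1)}$ on a full density subsequence (your pair-correlation argument for this matches the paper's property (iii)), the dyadic decomposition gives the stated exponent \emph{provided} the annular count $A_k(m)=\sum_{|n-m|\leq W}r_2(n)$ satisfies $A_k(m)\ll W(\log m)^{o(1)}$ at every dyadic width $W=2^{k+1}\Delta_m$. The genuine gap is that the tools you propose cannot deliver this bound in the range where it matters, namely $W$ between $(\log m)^{1/2}$ and a small power of $m$. Chebyshev applied to the mean-square estimate $\int_0^X(E(t+W)-E(t-W))^2\,dt\ll X^{3/2}$ only shows that, outside a subset of $\cN$ of density zero, the deviation of the annulus count from $2\pi W$ is $\ll m^{1/4+o(1)}$ (to make the exceptional set $o(X/\sqrt{\log X})$ you must take the Chebyshev threshold at least of order $X^{1/4}(\log X)^{1/4}$, and the extra factor from uniting $O(\log X)$ scales makes this worse). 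An error of size $m^{1/4+o(1)}$ is useless against a main term $W=(\log m)^{O(1)}$: at those scales your bound on $S_k$ becomes $W^{-2q}m^{1/4+o(1)}$, which destroys the claimed $(\log m)^{-q+1/2+o(1)}$. Even the sharper annulus-variance results (of Bleher--Lebowitz type, order $XW\log X$) would still leave a polylogarithmic loss at the smallest scales, so the scale-by-scale thinning as described does not close.

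The second missing ingredient is the treatment of the rare integers with abnormally large $r_2(n)$ (which can be as big as $\exp(c\log n/\log\log n)$, far beyond any power of $\log n$). Your close-range fix only imposes the normal-order bound on the nearest neighbours of $m$; it says nothing about a high-multiplicity $n$ at distance, say, $(\log m)^{10}$ or $m^{1/10}$ from $m$, and a single such term ruins the bound $A_k(m)\ll W(\log m)^{o(1)}$ unless it has been excluded. This is exactly where the paper's proof does its real work: it splits the sum according to the size of $r_2(n)$ and, via a removal construction (delete a prescribed \emph{number of elements of $\cN$}, not an interval length, around each $n$ with $r_2(n)\geq r$, using $\sum_{n\leq x}r_2(n)\sim\pi x$ and $\sum_{n\leq x}r_2(n)^2\ll x\log x$ together with Chebyshev), it produces a full-density set of $m$ on which $n$ with $r_2(n)\geq r$ is separated from $m$ by $\gg r/(\text{polylog})$ elements of $\cN$ (respectively $\gg_\epsilon r^{5/4-\epsilon}/\log r$ when $r\geq(\log m)^2$); this is then converted into a lower bound on $|m-n|$ via the upper bound for the number of elements of $\cN$ in short intervals, and combined with an \emph{unweighted} tail bound $\sum_{|m-n|\geq G}|m-n|^{-2q}\ll(\log G)^2G^{1-2q}(\log m)^{-1/2+o(1)}$. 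None of these repulsion statements can be recovered from Gauss-circle variance estimates, so as it stands your proposal does not prove the lemma; to repair it you would essentially have to import the multiplicity-repulsion machinery of Kurlberg--Uebersch\"ar rather than the Chebyshev-on-variance step.
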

\begin{proof}
Let $m\in\cN_1(x)$. For convenience denote $L=\log m$. We first of all claim that there is a full density subsequence of $\cN_1\subset\cN$ such that for $m\in\cN_1$ (proof to be given in a separate section)
\begin{itemize}
\item[(i)]
{if $r\in[L^{1/2+o(1)},L^2]$ there are 
$\gg r/((\log m)^{1/2+o(1)}(\log r)^2)$ elements in $\cN$ that lie in between $m$ and $n$ if $r_2(n)\geq r$}
\item[(ii)]
{if $r\geq L^2$ there are $\gg_\epsilon r^{5/4-\epsilon}/\log r$ elements in $\cN$ that lie in between $m$ and $n$}
\item[(iii)]
{the nearest neighbours of $m$ are at distance at least $L^{1/2-o(1)}$}
\item[(iv)]
{for $G\in[2,m^{1-\epsilon}]$ we have the bound 
$$\sum_{\substack{n\in\cN \\ |m-n|\geq G}}|m-n|^{-2q} \leq \frac{(\log G)^2}{G^{2q-1}L^{1/2-o(1)}}$$}
\end{itemize}

We next divide the sum in three parts according to the size of $r_2(n)$:

{\em Small $r_2(n)$:} $r_2(n)\leq L^{1/2+o(1)}$. We have
\begin{align*}
\sum_{\substack{|m-n|\gg L^{1/2-o(1)}\\ r_2(n)\leq L^{1/2+o(1)}}}\frac{r_2(n)}{|m-n|^{2q}}
\ll& \frac{L^{1/2+o(1)}}{L^{q-1/2-o(1)}L^{1/2-o(1)}}=L^{-q+1/2+o(1)}
\end{align*}

{\em Medium $r_2(n)$:}
Set $R_i=2^i(\log m)^{1/2+o(1)}$. By property (6) (cf. \cite{KU14}) we may set $T=|m-n|$ and we then obtain
$$\frac{T(\log T)^2}{(\log m)^{1/2-o(1)}}\gg \frac{R_i}{L^{1/2+o(1)}(\log R_i)^2}$$
which implies $T\gg R_i/L^{o(1)}$.
 
We thus have 
\begin{equation}
\begin{split}
\sum_{r_2(n)\in[R_i,R_{i+1}]}\frac{r_2(n)}{|m-n|^{2q}} 
&\ll \sum_{\substack{n\in\cN \\ |m-n|\gg R_i/L^{o(1)}}}\frac{R_i}{|m-n|^{2q}}\\
&\ll \frac{R_i(\log R_i)^2}{L^{1/2-o(1)}R_i^{2q-1}}\\
&=\frac{i^2}{L^{q-1/2-o(1)}2^{i(2q-2)}} 
\end{split}
\end{equation}
and summing over $i\geq0$ yields the result (we used $q>1)$.

{\em Large $r_2(n)$:} We imitate the argument used in the proof of property (8) in \cite{KU14}.\\
Take $R_i=2^i L^2/100$. Suppose $r_2(n)\in[R_i,2R_i]$. By property (ii) we have $|m-n|\gg_\epsilon R_i^{5/4-\epsilon}$.
And, thus, we have
$$\sum_{r_2(n)\in[R_i,2R_i]}\frac{r_2(n)}{|m-n|^{2q}}
\ll R_i\sum_{|m-n|\gg_\epsilon R_i^{5/4-\epsilon}}|m-n|^{-2q}
\ll_\epsilon R_i(R_i^{5/4-\epsilon})^{1-2q}L^{-1/2+o(1)}$$
and this is bounded by $2^{i(9/4-5q/2+\epsilon)}L^{4-5q+\epsilon}$. 

Summing over $i$ (recall $q>1$, so $9/4-5q/2<0$) we obtain
$$\sum_{r_2(n)\geq L^2}\frac{r_2(n)}{|m-n|^{2q}}=O_\epsilon(L^{4-5q+\epsilon})=O(L^{1-2q})$$
because $q>1$ implies $4-5q<1-2q$.

\end{proof}

\subsection{Proof that $\cN_1$ is of full density:}
\begin{itemize}
\item[(i)]{
We use the same strategy as in the proof of property (7) in \cite{KU14}. 
Given that $n\in\cN$ satisfies $r=r_2(n)\in[(\log n)^{1/2}\log\log n,(\log n)^2]$ we remove \\
$2r/((\log n)^{1/2}(\log\log n)(\log r)^2)$ elements of $\cN$ to the left and right of $n$ and denote the set of such removed elements by $R_n$. As was shown in the proof of property (7), it is sufficient to bound the total number of removed elements in all sets $R_n$ such that $n\leq 2x$. If we show that this set is of zero density in $\cN$ (i.e. its cardinality is $o(x/\sqrt{\log x})$), then we have shown that the set of remaining elements is of full density in $\cN$. Moreover, for any $m$ in this full density set and any $n\in\cN$ such that $r_2(n)\geq r$ and $r\in[(\log n)^{1/2}\log\log n,(\log n)^2]$ we have at least $r/((\log n)^{1/2}(\log\log n)(\log r)^2)$ elements of $\cN$ between $m$ and $n$.

It remains to be shown that the union of $R_n$, $n\leq 2x$, is indeed of cardinality $o(x/\sqrt{\log x})$. As in the proof of (7), it is sufficient (in view of the upper bound $(\log n)^2$) to consider $n\in[x/(\log x)^{10},2x]$ and we, thus, have $\log n=(1+o(1))\log x$.

In view of the asymptotic $$\sum_{n\leq x}r_2(n)\sim \pi x$$ Chebyshev gives that the number of $n$ such that $r_2(n)\geq T$ is $\ll x/T$.

Let $r_2(n)\in[R_i,2R_i]$ with $R_i=2^i(\log x)^{1/2}\log\log x$. So the total number of removed elements is
$$\ll \frac{x}{R_i}\frac{R_i}{(\log x)^{1/2}(\log\log x)(\log R_i)^2}\ll \frac{x}{(\log x)^{1/2}(\log\log x)i^2}$$
and summing over $i$ shows that the total number of elements removed is $o(x/\sqrt{\log x})$.
}\\

\item[(ii)]{
We follow the same strategy as in the proof of property (8) in \cite{KU14}. Given $n\in\cN$ such that $r=r_2(n)\geq(\log n)^2$ we remove $2r^{5/4-\epsilon}/\log r$ elements of $\cN$ to the left and right. We use the bound $$\sum_{n\leq x}r_2(n)^2\ll x\log x.$$ As in the proof of (8) it is sufficient to consider $n\in[\sqrt{x},2x]$. Let $R_i=2^i(\log x)^2/100$ and consider $r\in[R_i,2R_i]$. Chebyshev gives that the number of
$n\leq 2x$ s. t. $r_2(n)\geq R_i$ is $\ll \frac{x\log x}{R_i^2}$ and the associated removed contribution is
$$\ll \frac{x\log x}{R_i^2}\frac{R_i^{5/4-\epsilon}}{\log R_i}
\ll x\log x ~ R_i^{-3/4-\epsilon}=x2^{-i(3/4+\epsilon)}(\log x)^{-1/2-\epsilon}$$
and summing over $i$ shows that the total number of removed elements is $o(x/\sqrt{\log x})$.
}\\

\item[(iii)]{This is the same as property (3) in \cite{KU14}.}\\

\item[(iv)]{
The proof is analogous to the proof of property (9) in \cite{KU14}, where $(m-n)^{-2}$ is replaced with $|m-n|^{-2q}$ and we use
$$\sum_{h\geq T}\frac{c(h)}{h^{2q}}\ll T^{1-2q}.$$ 
}
\end{itemize}

\section{Proofs of Theorems \ref{multi_thm} and \ref{simple_thm}} 

Let us consider the moments
\begin{align*}
m_q(\lambda_j)=&\Delta_j^{2q}\sum_{\xi\in\Z^2} \frac{1}{||\xi|^2-\lambda|^{2q}}\\
=&(\tilde{n}_j-\lambda_j)^{2q}\sum_{n\in\cN}r_2(n)|n-\lambda_j|^{-2q}, \quad q>1,
\end{align*}
which give information about the multifractal properties of the eigenfunctions.


\subsection{Proof of Theorem \ref{multi_thm}}\label{sec-multi}
Let us denote
$$\cN_2=\{n\in\cN_1 \mid r_2(n)=(\log n)^{\tfrac{1}{2}\log 2+o(1)}\}$$
which is of full density in $\cN$ (cf. \cite{KU14}).

Moreover, $\alpha(\Lambda)\in(0,\tfrac{1}{2})$ and Chebyshev imply that $\Delta_j\ll (\log \tilde{n}_j)^{\alpha+o(1)}$ is a full density condition if we assume $(\log \tilde{n}_j)^{o(1)}\nearrow+\infty$. We denote by $\cN'$ the subset of $\cN_2$ which satifies this condition.

We need to estimate the fractal exponent associated with the subsequence $\cN'$
$$D_q=\frac{1}{1-q}\limsup_{\tilde{n}_j\in\cN'}\frac{\log m_q(\tilde{n}_j)-q\log m_1(\tilde{n}_j)}{\log N}.$$

Let $L=\log \tilde{n}_j$. For $q>1$ and $\tilde{n}_j\in\cN'$ we have
\begin{equation}
\begin{split}
m_q(\tilde{n}_j)=~ &r_2(\tilde{n}_j) + \Delta_j^{2q}O(L^{1/2-q+o(1)})\\
=~ &L^{\log 2/2+o(1)}+O(L^{2\alpha q+o(1)})\times O(L^{1/2-q+o(1)})
\end{split}
\end{equation}
where we used Lemma \ref{tail_bound_0}.

We have $q>(1-\log 2)/(2-4\alpha)$ which implies $-(1-2\alpha)q+\tfrac{1}{2}<\tfrac{1}{2}\log 2$
so that we have the asymptotic
$$m_q(\tilde{n}_j)\sim~ L^{\log 2/2+o(1)}.$$

Thus, we find
$$\limsup_{\tilde{n}_j\in\cN'}\frac{\log m_q(\tilde{n}_j)}{\log\log \tilde{n}_j}=\lim_{\tilde{n}_j\in\cN'}\frac{\log m_q(\tilde{n}_j)}{\log\log \tilde{n}_j}=\frac{1}{2}\log 2.$$

We need a separate estimate for $m_1(\tilde{n}_j)$, because the above estimate assumes $q$ to be sufficiently large. 

For $q=1$ we have
$$\sum_{n\in\cN, \; n\neq \tilde{n}_j}\frac{r_2(n)}{(n-\tilde{n}_j)^2}
=\sum_{\substack{n\in\cN, \; n\neq \tilde{n}_j\\ r_2(n)\leq L^2}}\frac{r_2(n)}{(n-\tilde{n}_j)^2} 
+\sum_{\substack{n\in\cN, \; n\neq \tilde{n}_j\\ r_2(n)> L^2}}\frac{r_2(n)}{(n-\tilde{n}_j)^2}$$
where the second sum is bounded by $O_\epsilon(L^{-1+\epsilon})$ in view of the bound obtained for sums over large $r_2(n)$ in the proof of Lemma \ref{tail_bound}, and the first sum is bounded by
$$L^2\sum_{|n-\tilde{n}_j|\gg L^{1/2-o(1)}}(n-\tilde{n}_j)^{-2}\ll L^{1+o(1)}$$ where we used (9) in \cite{KU14}.

This yields the bound
\begin{equation}\label{q1_bound}
\sum_{n\in\cN, \; n\neq \tilde{n}_j}\frac{r_2(n)}{(n-\tilde{n}_j)^2}\ll L^{1+o(1)}.
\end{equation}

So, we see that $$c:=\liminf_{n\in\cN'}\frac{\log m_1(n)}{\log\log n}\in[\tfrac{1}{2}\log 2,1].$$

As a consequence of Prop. \ref{scaling}, we see that
$$\log\ave{\tau_q}_n=(1-2q)\log G+O(\log q)+o(1)$$
and from the definition of $G$ we have
$$\log\ave{\tau_q}_n=-2q\log \ave{\Delta_j}_n=-2q\, \alpha\log\log n+o(\log\log n).$$

Combining these equations we obtain
\begin{equation}
\log G=\alpha(1+o(1))\frac{2q}{2q-1}\log\log n,
\end{equation}
and we recall the requirement $G/\sqrt{\log n}\to+\infty$. Therefore, we must have
$\alpha\frac{2q}{2q-1}<\frac{1}{2}$ which yields the upper bound $q<(2-4\alpha)^{-1}$.

Since $N=2\pi G$, we obtain
\begin{align*}
\lim_{\tilde{n}_j\in\cN'\to+\infty}\frac{h_q(\tilde{n}_j)}{\log N}
=&\frac{1}{\alpha}(1-\frac{1}{2q})\lim_{\tilde{n}_j\in\cN'\to+\infty}\frac{\log m_q(\tilde{n}_j)}{\log\log \tilde{n}_j} \\ 
=&\frac{1}{2\alpha}(1-\frac{1}{2q})\log 2
\end{align*}
and 
\begin{align*}
\liminf_{\tilde{n}_j\in\cN'\to+\infty}\frac{h_1(\tilde{n}_j)}{\log N}=&\frac{1}{\alpha}(1-\frac{1}{2q})\liminf_{n\in\cN'\to+\infty}\frac{\log m_1(\tilde{n}_j)}{\log\log \tilde{n}_j}\\
=&\frac{c}{\alpha}(1-\frac{1}{2q}).
\end{align*}

We, thus, obtain the following formula for the fractal exponent:
\begin{equation}
\begin{split}
D_q=&\limsup_{\tilde{n}_j\in\cN'\to+\infty}\frac{H_q(\tilde{n}_j)}{\log N}\\
=&\frac{1}{1-q}\left(\lim_{\tilde{n}_j\in\cN'\to+\infty}\frac{h_q(\tilde{n}_j)}{\log N}-q\liminf_{\tilde{n}_j\in\cN'\to+\infty}\frac{h_1(\tilde{n}_j)}{\log N}\right)\\
=&\frac{1}{2\alpha}\frac{2cq-\log 2}{q-1}\left(1-\frac{1}{2q}\right)
\end{split}
\end{equation}

\subsection{Proof of Theorem \ref{simple_thm}}\label{simple}

We know (cf. \cite{RU14}) that the new eigenvalues of the operator $-\Delta_\varphi$ are on average close to a nearby old Laplace eigenvalue. More precisely, there exists a full density subsequence $\cN_2\subset\cN$ such that $\Delta_m=O(1/\sqrt{\log m})$.

We have
$$\sum_{n\in\cN}\frac{r_2(n)}{|n-\lambda_m|^{2q}}
=\frac{r_2(\tilde{n}_j)}{|\tilde{n}_j-\lambda_j|^{2q}}+\sum_{\substack{n\in\cN \\ m\neq \tilde{n}_j}}\frac{r_2(m)}{|m-\lambda_j|^{2q}}$$
and, by applying Lemma \ref{tail_bound_0}, we get for $\tilde{n}_j\in\cN_1$ and $q>1$
$$\sum_{\substack{m\in\cN \\ m\neq \tilde{n}_j}}\frac{r_2(m)}{|m-\lambda_j|^{2q}}
=O((\log \tilde{n}_j)^{-q+1/2+o(1)}).$$

If $\tilde{n}_j\in\cN_1\cap\cN_2$, then, for any $q>1$,
$$m_q(\lambda_j)=(\tilde{n}_j-\lambda_j)^{2q}\sum_{m\in\cN}\frac{r_2(m)}{|m-\lambda_j|^{2q}}
= r_2(\tilde{n}_j)+O((\log \tilde{n}_j)^{-2q+1/2+o(1)})
$$
and thus the general entropy has asymptotics $$h_q(\tilde{n}_j)\sim \log r_2(\tilde{n}_j)$$
and $$d_q=\lim_{\tilde{n}_j\in \cN_1\to+\infty} \frac{h_q(\tilde{n}_j)}{\tfrac{1}{2}\log2\cdot\log\log \tilde{n}_j}=1$$
because $r_2(\tilde{n}_j)=(\log \tilde{n}_j)^{\tfrac{1}{2}\log 2+o(1)}$.

Moreover, for $q=1$, we recall the bound \eqref{q1_bound} to see that
\begin{equation}
m_1(\lambda_j)=r_2(\tilde{n}_j)+O(L^{o(1)})\sim (\log\tilde{n}_j)^{\tfrac{1}{2}\log 2+o(1)}
\end{equation}
where we used $\Delta_j^2=O((\log \tilde{n}_j)^{-1})$. It follows that
$$d_1=\lim_{\tilde{n}_j\in \cN_1\to+\infty} \frac{h_1(\tilde{n}_j)}{\tfrac{1}{2}\log2\cdot\log\log \tilde{n}_j}=1.$$

Hence, for $q\neq 1$, $$D_q=\frac{d_q-q d_1}{1-q}=1.$$

\section{Rigid quantizations}\label{sec-rigid}
 
In this section we will deal with quantizations which have the property that the new eigenvalues stay close to the intermediate values of the Laplace eigenvalues, $\bar{n}_j=\tfrac{1}{2}(n_j+n_{j+1})$, in the sense that $\ave{\Delta_j}_x \asymp \ave{\delta_j}_x = (\log x)^{1/2+o(1)}$.

Under the assumption of an analogue of the Sathe-Selberg asymptotic on $\omega_1(n)$, the number of distinct prime factors congruent to $1$ mod 4, we obtain the following lemma which improves the exponent in Lemma \ref{tail_bound_0} if $q$ is sufficiently large.
\begin{lem}\label{tail_bound}
Let $q>\tfrac{3}{2}$. There exists a subsequence of full density $\cN_3\subset\cN$ such that for any $m\in\cN_1$ we have the bound
$$\sum_{\substack{n\in\cN \\ n\neq m}}\frac{r_2(n)}{|m-n|^{2q}}\ll (\log m)^{-q+\tfrac{1}{2}\log2+f(q)+o(1)}$$ as $m\to+\infty$,
where $$f(q)=\frac{1}{2}\log 2\left(\exp\left\{\frac{\log 2}{q-1/2}\right\}-1\right).$$
\end{lem}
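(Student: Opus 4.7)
The plan is to revisit the proof of Lemma~\ref{tail_bound_0} and refine only the ``small $r_2(n)$'' case, which previously supplied the $L^{-q+1/2+o(1)}$ exponent via the crude bound $r_2(n)\le L^{1/2+o(1)}$. The new ingredient is the assumed Sathe--Selberg-type asymptotic for $\omega_1(n)$, the number of distinct prime factors of $n$ congruent to $1\pmod 4$. Combined with the pointwise relation $r_2(n)\asymp 2^{\omega_1(n)}$ (valid for $n\in\cN$ whose $\omega_1$-part is squarefree, a density-one condition inside $\cN$), this gives fine control on the density in $\cN$ of $n$ with prescribed value of $r_2$.

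Writing $A=\tfrac{1}{2}\log\log x$, $\beta=k/A$ and $H_1(\beta):=\tfrac{\beta}{2}(1-\log\beta)$, Sathe--Selberg should deliver $\rho_k(x)\asymp L^{H_1(\beta)-1}/\sqrt{k}$ for the density in $\Z$ of $n\in\cN$ with $\omega_1(n)=k$. For each dyadic level $k$, a Chebyshev/union-bound argument parallel to those used to establish properties (i)--(iii) in the proof of Lemma~\ref{tail_bound_0} allows the removal of a subset of $m\in\cN$ of density at most $\rho_k\, x\cdot L^{o(1)}$, ensuring that on the remainder the nearest $n\in\cN$ with $\omega_1(n)=k$ lies at distance $\gg L^{1-H_1(\beta)-o(1)}$. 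Intersecting these removals over the relevant range of $k$ yields a full-density subsequence $\cN_3\subset\cN$, and a partial-summation estimate analogous to property (iv) then bounds the per-range contribution as $\sum_{\omega_1(n)=k}2^k/|m-n|^{2q}\ll L^{E_q(\beta)+o(1)}$, where $E_q(\beta)$ combines the weight $2^k=L^{(\beta/2)\log 2}$ with the gap $L^{1-H_1(\beta)}$ raised to the power $1-2q$.

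Summing dyadically over $k$, the total should be controlled by $\sup_\beta E_q(\beta)$. A Lagrange/Legendre-type calculation identifies this supremum explicitly, and combining with the medium and large $r_2(n)$ estimates already obtained in the proof of Lemma~\ref{tail_bound_0} (which contribute $L^{-q+1/2+o(1)}$ and $L^{1-2q}$ respectively, and are absorbed since $q>3/2$), the total emerges as $\ll L^{-q+\tfrac{1}{2}\log 2+f(q)+o(1)}$ with the claimed $f(q)=\tfrac{1}{2}\log 2\bigl(e^{\log 2/(q-1/2)}-1\bigr)$.

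The hard part will be the bookkeeping in the simultaneous gap construction and the explicit optimisation. The removal procedure must be arranged so that a single full-density subsequence $\cN_3$ works uniformly across all dyadic $r_2$-levels (the total ``budget'' of removed $m$'s has to remain $o(1/\sqrt{\log x})$ after summing over $k$), and one must then execute the optimisation carefully so that the constants line up to give exactly the claimed formula. The optimum $\beta^*(q)$ balances the growth $L^{(\beta/2)\log 2}$ of the weight against the decay $L^{(1-H_1(\beta))(1-2q)}$ arising from the gap; the hypothesis $q>3/2$ arises precisely so that this supremum sits in the regime where the density/gap trade-off from Sathe--Selberg yields an admissible (negative) exponent after incorporating the logarithmic losses inherent in the Chebyshev removal step.
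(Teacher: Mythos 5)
Your overall strategy (the Sathe--Selberg analogue for $\omega_1$, a dyadic decomposition according to the size of $2^{\omega_1(n)}$, Chebyshev-type removals around the heavy $n$ to force gaps, then summation over the levels) is indeed the paper's mechanism for the new medium range, but the proposal fails at two concrete points. First, you cannot ``refine only the small case'' and keep the medium estimate of Lemma \ref{tail_bound_0}: that range $r_2(n)\in[L^{1/2+o(1)},L^2]$ is dominated by its lowest dyadic block and contributes $L^{-q+1/2+o(1)}$, and since $f(q)\to 0$ as $q$ grows while $\tfrac12\log 2<\tfrac12$, one has $\tfrac12>\tfrac12\log 2+f(q)$ for all $q\gtrsim 2.39$; so the old medium bound is \emph{not} absorbed into the claimed exponent (and the absorption has nothing to do with $q>3/2$). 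The paper instead redoes the whole range above $L^{\frac12\log 2+\eta}$, with a two-parameter decomposition in $2^{\omega_1(n)}$ and $f(n)=r_2(n)/(4\cdot 2^{\omega_1(n)})$ (properties (i) and (v) of $\cN_3$). Relatedly, you cannot discard the $n$ whose part built from primes $\equiv 1\pmod 4$ is non-squarefree by a ``density one inside $\cN$'' remark: the sum runs over \emph{all} $n\in\cN$ and the removals construct the exceptional set in $m$, not in $n$; these terms with $r_2(n)\gg 2^{\omega_1(n)}$ are exactly what property (v) (via $\sum_{n\le x}r_2(n)/2^{\omega_1(n)}\ll x/\sqrt{\log x}$) controls, and it is the resulting extra geometric sum that forces $q>\tfrac32$ -- not the location of your supremum.

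Second, the decisive step -- ``a Lagrange/Legendre-type calculation identifies this supremum explicitly'' as $-q+\tfrac12\log 2+f(q)$ -- is asserted rather than carried out, and it does not come out that way. With the per-level bound you state (weight $L^{(\beta/2)\log 2}$ times the gap $L^{1-H_1(\beta)}$ to the power $1-2q$), the supremum is attained at $\beta^*=2^{1/(2q-1)}$ and equals $(2q-1)\bigl(2^{1/(2q-1)-1}-1\bigr)$, which is strictly larger than the target (about $-1.11$ versus $-1.45$ at $q=2$), so your own estimate would not prove the lemma; if instead you feed the density $L^{H_1(\beta)-1}$ into the tail beyond the gap, the per-level exponent becomes $\tfrac{\beta}{2}\log 2-2q\,(1-H_1(\beta))$ with supremum $q\bigl(2^{1/(2q)}-2\bigr)$ at $\beta^*=2^{1/(2q)}$ -- a different function again (stronger than needed, but then the level-restricted tail estimates and the removal budget would have to be justified, which you do not do). The point is that $f(q)$ in the lemma is not the value of a Legendre transform of the Sathe--Selberg rate function $H_1$: in the paper it arises as the threshold $\eta>\tfrac12\log 2\,(e^{\log 2/(q-1/2)}-1)$ needed for the geometric series $\sum_i 2^i(1+2\eta/\log 2)^{-i(q-1/2)}$ over the dyadic index to converge, the counts being anchored at the cruder bound $\ll x(1+2\eta')^{-i}/\sqrt{\log x\,\log\log x}$ rather than at the true density. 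As written, the proposal neither reproduces that exponent nor establishes a valid substitute for it.
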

\begin{proof}
Let $m\in\cN_3(x)$. For convenience denote $L=\log m$. We first of all claim that there is a full density subsequence of $\cN_3\subset\cN$ such that for $m\in\cN_3$ (proof to be given in a separate section) the following properties hold. Note that properties (ii)-(iv) coincide with the respective properties of $\cN_1$, which we restate here for completeness.
\begin{itemize}
\item[(i)]
{if $r\in[L^{\log 2/2+\eta},L^2]$, and $r\in[2^iL^{\log 2/2+\eta},2^{i+1}L^{\log 2/2+\eta}]$ for some integer $i$, then there are 
$\gg (1+2\eta')^i/((\log r)^2 \log L )$ elements in $\cN$ that lie in between $m$ and $n$ if $2^{\omega_1(n)}\geq r$}
\item[(ii)]
{if $r\geq L^2$ there are $\gg_\epsilon r^{5/4-\epsilon}/\log r$ elements in $\cN$ that lie in between $m$ and $n$}
\item[(iii)]
{the nearest neighbours of $m$ are at distance at least $L^{1/2-o(1)}$}
\item[(iv)]
{for $G\in[2,m^{1-\epsilon}]$ we have the bound 
$$\sum_{\substack{n\in\cN \\ |m-n|\geq G}}|m-n|^{-2q} \leq \frac{(\log G)^2}{G^{2q-1}L^{1/2-o(1)}}$$}
\item[(v)]
{for $f(n):=r_2(n)/{4\cdot 2^{\omega_1(n)}}$ and $Q\geq1$ there are at least 
$Q/((\log Q)^2\log L)$ elements in $\cN$ that lie in between $m$ and $n$ if $f(n)\geq Q$
}
\end{itemize}

We next divide the sum in three parts according to the size of $r_2(n)$:

{\em Small $r_2(n)$:} $r_2(n)\leq L^{\log2/2+\epsilon}$. We have
\begin{align*}
\sum_{\substack{|m-n|\gg L^{1/2-o(1)} \\ r_2(n)\leq L^{\log2/2+\epsilon}}}\frac{r_2(n)}{(m-n)^{2q}}
\ll& \sum_{|m-n|\gg L^{1/2-o(1)}}\frac{r_2(n)}{(m-n)^{2q}}\\
\ll& \frac{L^{\log2/2+\epsilon}}{L^{q-1/2-o(1)}L^{1/2-o(1)}}=L^{-q+\log2/2+\epsilon}
\end{align*}
 
{\em Medium $r_2(n)\in[L^{\log 2/2+\eta},L^2]$:}
Set $R_i=2^i(\log m)^{\log 2/2+\eta}$. By property (6) (cf. \cite{KU14}) we may set $T=|m-n|$ and we then obtain, in view of (i),
$$\frac{T(\log T)^2}{(\log m)^{1/2-o(1)}}\gg \frac{(1+2\eta')^i}{(\log R_i)^2 \log L},$$ where $\eta'=\eta/\log 2$,
which implies $T\gg (1+2\eta')^i L^{1/2-o(1)}$.

Similarly, in view of (v), if $f(n)=r_2(n)/(4\cdot 2^{\omega_1(n)})\in[2^j,2^{j+1}]$, then we have
$$\frac{T(\log T)^2}{(\log m)^{1/2-o(1)}}\gg \frac{2^j}{j^2 \log L}$$
which implies $T\gg 2^j L^{1/2-o(1)}$.

Combining these two lower bounds we get $$|m-n|\gg G_{i,j}:=2^{j/2}(1+2\eta')^{i/2}L^{1/2-o(1)}.$$
 
We define $$\cN_{i,j}:=\{n\in\cN \mid f(n)\in[2^j,2^{j+1}] \; \text{and} \; 2^{\omega_1(n)}\in[R_i,R_{i+1}]\}.$$

Therefore,
$$\sum_{\substack{n\in\cN, \; n\neq m \\ r_2(n)\in[L^{\log 2/2+\eta},L^2]}}
\frac{r_2(n)}{|m-n|^{2q}}=\sum_{i,j\geq 0} \sum_{n\in\cN_{i,j}}\frac{r_2(n)}{|m-n|^{2q}}
\ll \sum_{i,j\geq0} 2^j \sum_{n\in\cN_{i,j}}\frac{2^{\omega_1(n)}}{|m-n|^{2q}}$$

Let us bound the interior sum:
\begin{equation}
\begin{split}
\sum_{n\in\cN_{i,j}}\frac{2^{\omega_1(n)}}{|m-n|^{2q}} 
&\ll \sum_{|m-n|\gg G_{i,j}}\frac{R_i}{|m-n|^{2q}}\\
&\ll \frac{R_i(\log G_{i,j})^2}{L^{q-o(1)}(1+2\eta')^{i(q-1/2)}2^{j(q-1/2)}}\\
&=\frac{2^i}{(1+2\eta')^{i(q-1/2)}2^{j(q-1/2)})} L^{-q+\log2/2+\eta+o(1)}
\end{split}
\end{equation}

And if $q>\max(\tfrac{3}{2},\tfrac{1}{2}+\frac{\log 2}{\log(1+2\eta')})$, we may sum over $i$ and $j$ to obtain 
$$\sum_{\substack{n\in\cN, \; n\neq m \\ r_2(n)\in[L^{\log 2/2+\eta},L^2]}}\frac{r_2(n)}{|m-n|^{2q}}\ll_\epsilon L^{-q+\log2/2+\eta+o(1)}.$$

So, if $q>\tfrac{3}{2}$, then we need
$$q>\tfrac{1}{2}+\frac{\log 2}{\log(1+2\eta/\log 2)}$$
which is achieved by $$\eta>\tfrac{1}{2}\log 2\left[\exp\left\{\frac{\log 2}{q-\tfrac{1}{2}}\right\}-1\right].$$

{\em Large $r_2(n)$:} We imitate the argument used in the proof of property (8) in \cite{KU14}.\\
Take $R_i=2^i L^2/100$. Suppose $r_2(n)\in[R_i,2R_i]$. By property (ii) we have $|m-n|\gg_\epsilon R_i^{5/4-\epsilon}$.
And, thus, we have
$$\sum_{r_2(n)\in[R_i,2R_i]}\frac{r_2(n)}{|m-n|^{2q}}
\ll R_i\sum_{|m-n|\gg_\epsilon R_i^{5/4-\epsilon}}|m-n|^{2q}
\ll_\epsilon R_i(R_i^{5/4-\epsilon})^{1-2q}L^{-1/2+o(1)}$$
and this is bounded by $2^{i(9/4-5q/2+\epsilon)}L^{4-5q+\epsilon}$. 

Summing over $i$ (recall $q>1$, so $9/4-5q/2<0$) we obtain
$$\sum_{r_2(n)\geq L^2}\frac{r_2(n)}{|m-n|^{2q}}=O_\epsilon(L^{4-5q+\epsilon})=O(L^{1-2q})$$
because $q>1$ implies $4-5q<1-2q$.

\end{proof}

\subsection{Proof that $\cN_3$ is of full density:}
\begin{itemize}
\item[(i)]{
We use the same strategy as in the proof of property (7) in \cite{KU14}. 
Given that $n\in\cN$ satisfies $r=2^{\omega_1(n)}\in[(\log n)^{\tfrac{1}{2}\log2+\eta},(\log n)^2]$
we have $r\in[2^i(\log n)^{\tfrac{1}{2}\log2+\eta},2^{i+1}(\log n)^{\tfrac{1}{2}\log2+\eta}]$ for some integer $i$.
We remove \\
$(1+2\eta')^i/((\log\log n)(\log r)^2)$ elements of $\cN$ to the left and right of $n$ and denote the set of such removed elements by $R_n$. As was shown in the proof of property (7), it is sufficient to bound the total number of removed elements in all sets $R_n$ such that $n\leq 2x$. If we show that this set is of zero density in $\cN$ (i.e. its cardinality is $o(x/\sqrt{\log x})$), then we have shown that the set of remaining elements is of full density in $\cN$. Moreover, for any $m$ in this full density set and any $n\in\cN$ such that $r_2(n)\geq r$ and $r\in[(\log n)^{\tfrac{1}{2}\log 2+\eta},(\log n)^2]$ with $r\in[2^i(\log n)^{\tfrac{1}{2}\log2+\eta},2^{i+1}(\log n)^{\tfrac{1}{2}\log2+\eta}]$, we have at least $(1+2\eta')^i/((\log\log n)(\log r)^2)$ elements of $\cN$ between $m$ and $n$.

It remains to be shown that the union of $R_n$, $n\leq 2x$, is indeed of cardinality $o(x/\sqrt{\log x})$. As in the proof of (7), it is sufficient (in view of the upper bound $(\log n)^2$) to consider $n\in[x/(\log x)^{10},2x]$ and we, thus, have $\log n=(1+o(1))\log x$.

Recall that $\omega_1(n)$ denotes the number of distinct prime factors congruent to $1$ mod 4. The analogue of the Sathe-Selberg asymptotic for $\omega_1(n)$ gives 
\begin{equation}\label{delange}
\#\{n\in\cN(x) \mid \omega_1(n)=k\} \sim C_0 \frac{x}{\log x}\frac{(\log\log x)^{k-1}}{2^{k-1}(k-1)!}
\end{equation}
for $k\leq C\log\log x$, where $C$ is a constant.

Let $r=2^{\omega_1(n)}\in[R_i,R_{i+1}]$ with $R_i=2^i(\log x)^{\tfrac{1}{2}\log 2+\eta}$ 


Because $\omega_1(n)=\log r/\log 2$, we see that $r\in[R_i,R_{i+1}]$ is equivalent to
$$\omega_1(n)\in [(\tfrac{1}{2}+\eta')\log\log x +i,(\tfrac{1}{2}+\eta')\log\log x+i+1]$$
where $\eta'=\eta/\log 2$.
Thus, the asymptotic \eqref{delange} yields that the number of $n\in\cN$ satisfying $r\in[R_i,R_{i+1}]$ is bounded by
$$\ll \left(\frac{1}{1+2\eta'}\right)^i\frac{x}{\sqrt{\log x}\sqrt{\log\log x}}\ll \frac{x}{(1+2\eta')^i\sqrt{\log x}}.$$ 

So the total number of removed elements is
$$\ll \frac{x}{(1+2\eta')^i\sqrt{\log x}}\frac{(1+2\eta')^i}{(\log\log x)(\log R_i)^2}\ll \frac{x}{\sqrt{\log x}(\log\log x)i^2}$$
and summing over $i$ shows that the total number of elements removed is $o(x/\sqrt{\log x})$.
}\\

\item[(ii)-(iv)]{cf. proof for $\cN_1$.
}\\

\item[(v)]{
First of all we have the bound (cf. \cite{KU14}) 
$$\sum_{n\in\cN(x)}f(n)\ll \frac{x}{\sqrt{\log x}}.$$
So Chebyshev gives that the number of $n\in\cN(x)$ s. t. $f(n)\geq T$ is $\ll x/(T\sqrt{\log x})$.

Let us denote $\cN_j=\{n\in\cN \mid f(n)\in[2^j,2^{j+1}]\}$. For each $n\in\cN_j$ let us remove $2^j/(j^2\log\log n)$ elements of $\cN$ to the right and left. 

Let us bound the total number of removed elements in $\cN(x)$ for the set $\cN_j$. It is sufficient to do this for elements $n\geq\sqrt{x}$, so that $\log\log n=(1+o(1))\log\log x$. Then, in view of Chebyshev, the total number of elements removed is
$$\ll \frac{x}{2^j\sqrt{\log x}}\cdot\frac{2^j}{j^2 \log\log x}
= \frac{x}{j^2\sqrt{\log x}\cdot \log\log x}$$
and summing over $j$ shows that the total number of elements removed for all sets $\cN_j$ is
$$\ll \frac{x}{\sqrt{\log x}\cdot\log\log x}=o(|\cN(x)|).$$
}
\end{itemize}

\subsection{Breakdown of multifractality} 

For rigid quantizations which have the property that the new eigenvalues stay close to the intermediate values of the old Laplace eigenvalues, in the sense that $$\ave{\delta_j}_x \asymp \ave{\Delta_j}_x = (\log x)^{1/2+o(1)},$$ we may see a breakdown of multifractality. 

We will use Lemma \ref{tail_bound} to obtain an estimate of the fractal exponent $d_q$ to illustrate this phenomenon.
Let $L=\log\tilde{n}_j$. The lemma yields the following estimate on the moment sum $m_q$:
\begin{equation}
\begin{split}
m_q(\tilde{n}_j)=~ &r_2(\tilde{n}_j) + \Delta_j^{2q}O(L^{(\log2/2-q+f(q)+o(1))})\\
=~ &L^{\log 2/2+o(1)}+O(L^{\log 2/2+f(q)+o(1)}).
\end{split}
\end{equation}

Thus, we find
$$\limsup_{n_j\in\cN'}\frac{\log m_q(n_j)}{\log\log n_j}\in[\tfrac{1}{2}\log 2,\tfrac{1}{2}\log 2+f(q)].$$

and (recall $\alpha=\tfrac{1}{2}$)
\begin{align*}
d_q=\limsup_{\tilde{n}_j\in\cN'\to+\infty}\frac{h_q(\tilde{n}_j)}{\log N}
=&2(1-\frac{1}{2q})\limsup_{\tilde{n}_j\in\cN'\to+\infty}\frac{\log m_q(\tilde{n}_j)}{\log\log \tilde{n}_j} \\ 
=&(1-\frac{1}{2q})(1+F(q))\log2
\end{align*}
where the function $F$ satisfies the inequality $$0\leq F(q)\leq \frac{2\log 2}{2q-1}$$ 
which does not rule out the possibility that $F(q)=(2q-1)^{-1}$ which in turn would yield $d_q=\log 2$. It is striking that even a precise asymptotic such as an analogue of Sathe-Selberg for integers representable as sums of two squares does not seem to be sufficient to improve this bound further. We are, therefore, led to suspect that multifractality may break down for such rigid quantizations.

\section{Multifractality of the ground state}\label{sec-ground}

Consider a general rectangular torus $\T^2=\R^2/2\pi\cL_0$, where $\cL_0=\Z(a,0)\oplus\Z(0,1/a)$ for some $a>0$. Denote the dual lattice of $\cL_0$ by $\cL$. We recall the $L^2$ expansion of the eigenfunction (we drop the $1/(4\pi)^2$ factor for convenience)
$$G_\lambda(x)=\sum_{\xi\in\cL}\frac{1}{|\xi|^2-\lambda}e^{i\left\langle\xi,x\right\rangle}$$

Let us consider the (non-normalized) complex moment
$$\zeta_\lambda(s)=\sum_{\xi\in\Z^2} \frac{1}{||\xi|^2-\lambda|^s}, \quad \re s> 1$$
which gives information about the multifractal properties of the eigenfunctions.

In order to study the multifractal properties of the ground state of the \u{S}eba billiard in the limit $\lambda\to 0$, we first of all note that the leading term is the coefficient $\lambda^{-s}$ which corresponds to the constant term (no dependence on $x$) in the $L^2$-expansion. 
We might expect interesting scaling properties of the fluctuations of the eigenfunction around the constant term (effectively, we are studying the scaling properties of the regularized function $G_\lambda+1/\lambda$).

Let us, therefore, study the modified moment
$$\zeta_\lambda^*(s)=\sum_{\xi\in\cL\setminus\{0\}}||\xi|^2-\lambda|^{-s}, \quad \re s> 1.$$
Accordingly, we define the $\ell^2$-normalized modified moment as
$$M_{s/2}^*(\lambda)=\frac{\zeta_\lambda^*(s)}{\zeta_\lambda^*(2)^{s/2}}.$$

The limit of $M_{s/2}^*(\lambda)$, as $\lambda\to 0$, exists and we have
$$M_{s/2}^*(0)=\frac{\zeta_Q(s)}{\zeta_Q(2)^{s/2}}$$
where $\zeta_Q$ denotes the Epstein zeta function for the quadratic form
$Q(m,n)=a^2m^2+a^{-2}n^2$: $$\zeta_Q(s)=\sum_{(m,n)\neq(0,0)}Q(m,n)^{-s}$$ 

In the half-plane $\re s> 1$, we have
$$M_{s/2}^*(\lambda)=\zeta_{Q}(s)/\zeta_{Q}(2)^{s/2}+\lambda F(s)+O(\lambda^2)\times G(s)$$
where $F(s)$ and $G(s)$ can be written explicitly in terms of $\zeta_{Q}$.

One may now continue $M_{s/2}^*(\lambda)$ analytically to the complex plane using the functional equation for $\zeta_{Q}$:
$$\zeta_Q(s)=\varphi_Q(s)\zeta_Q(1-s), \quad \varphi_Q(s)=\pi^{2s-1}\frac{\Gamma(1-s)}{\Gamma(s)}.$$ 

In order to derive the fractal exponent, we recall $q=s/2$ and we expect a scaling law of the form
$$M_q^*(\lambda)\sim k^{(1-q)D_q^*}, \quad \lambda\to0$$
where $k$ is the effective length of the summation (which is defined for $\re s>1$ and is constant, due to the fact
that we are dealing with the ground state and $\lambda$ is small).

Therefore, we define $$d_q^*
=\log \zeta_Q(2q)$$ as the fractal exponent with respect to the $\ell^\infty$-normalization
and, with respect to the $\ell^2$-normalization, we define
\begin{equation}
D_q^*
=\frac{\log \zeta_Q(2q)-q\log\zeta_Q(2)}{1-q}.
\end{equation}
where, for simplicity, we drop the constant factor $1/(\log k)$ (a normalization in some sense). 

Note that in the limit as $q\to1$ we obtain the Shannon entropy associated with the probability measure $\mu_\lambda(\xi)=\zeta_Q(2)^{-1}|\xi|^{-4}$:
$$\lim_{q\to 1}D_q^*=\log \zeta_Q(2)-2\frac{\zeta_Q'(2)}{\zeta_Q(2)}$$
which we may rewrite as 
$$H_{Sh.}(\mu_\lambda)=-\sum_{\xi\in\Z^2}\frac{1}{\zeta_Q(2)|\xi|^4}\log\left(\frac{1}{\zeta_Q(2)|\xi|^4}\right).$$ 


{\bf Symmetry around $q=\tfrac{1}{4}$:}
The functional equation for $\zeta_Q$ yields a similar property for $d_q^*$, $D_q^*$ with respect to the transformation $q\mapsto \tfrac{1}{2}-q$. 

We have 
$$d_{1/2-q}^*=d_q^*+\log\varphi_Q(2q)$$
and as an immediate consequence
\begin{equation}
\begin{split}
D_{1/2-q}^*=&\frac{d_{1/2-q}^*+\log\varphi_Q(2q)-(\tfrac{1}{2}-q)d_1^*}{\tfrac{1}{2}+q}\\
=&\frac{1-q}{\tfrac{1}{2}+q}\left(D_q^* +\frac{\log\varphi_Q(2q)+(2q-\tfrac{1}{2})\log \zeta_Q(2)}{1-q}\right).
\end{split}
\end{equation}

\section{Proof of Proposition \ref{scaling}}\label{sec-scaling}

We recall the definition $$\ave{f}_T=\frac{1}{T}\int_T^{2T}f(t)dt.$$

So let us calculate $\ave{\tau_q}_T$. We define the summation into two ranges: $|m-t|\leq T$ and its complement in $\cN$.

Let us denote $$\tau_q^t(t,G)=\sum_{G\leq|m-t|\leq T} r_2(m)|m-t|^{-2q}$$
and denote the sum over the complement $\tau_q^r(t,G)$.

First of all note that the contribution from $\tau_q^r$ is negligible (note $|t|\leq T$:
\begin{align*}
\sum_{|m-t|>T} r_2(m)|m-t|^{-2q}\ll_\epsilon \int_{|x-t|>T}\frac{x^\epsilon}{|x-t|^{2q}}dx\\
=\int_{|s|>T}\frac{|s+t|^\epsilon}{|s|^{2q}}ds=O_\epsilon(T^{1-2q+\epsilon}).
\end{align*}

Let us now calculate the mean of $\tau_q^t$:
\begin{equation}
\ave{\tau_q^t}_T = \frac{1}{T}\sum_{m\in[0,3T]} r_2(m) \int_T^{2T} |t-m|^{-2q}~\mathbbm{1}(t\mid G\leq |t-m|\leq T) dt
\end{equation}
where the summation is restricted to $m\in[0,3T]$, because for $m$ outside this interval we have $[T,2T]\cap[m-T,m+T]=\emptyset$.

The main contribution will arise from terms $m\in[T,2T]$. 
We will deal with these terms first and rewrite the integral inside the sum as
$$I_m=\int_{T-m}^{2T-m}|\tau|^{-2q}~\mathbbm{1}(\tau \mid G\leq |\tau|\leq T) d\tau$$

We may evaluate (recall $m\geq T$)
$$I_m^-=\int_{T-m}^0 |\tau|^{-2q}~\mathbbm{1}(\tau \mid G\leq |\tau|\leq T) d\tau
=
\begin{cases}
\int_{T-m}^{-G} |\tau|^{-2q} d\tau, \quad \text{if}~m\geq T+G, \\
\\
0, \quad \text{otherwise.}
\end{cases}
$$
which yields
$$
I_m^-=
\begin{cases}
\frac{1}{1-2q}(-G^{1-2q}+(m-T)^{1-2q}), \quad \text{if}~m\geq T+G, \\
\\
0, \quad \text{otherwise.}
\end{cases}
$$

Similarly, we may evaluate
\begin{equation}
\begin{split}
I_m^+=&\int^{2T-m}_0 \tau^{-2q}~\mathbbm{1}(\tau \mid G\leq |\tau|\leq T) d\tau\\
=&
\begin{cases}
\frac{1}{1-2q}((2T-m)^{1-2q}-G^{1-2q}), \quad \text{if}~ m\leq 2T-G,\\
\\
0, \quad \text{otherwise.}
\end{cases}
\end{split}
\end{equation}

Let us now calculate the sums over $m$.

We have
\begin{equation}
\begin{split}
\sum_{m\in[T+G,2T]}r_2(m)I_m^-
=&~\frac{1}{2q-1}G^{1-2q}\sum_{m\in[T+G,2T]}r_2(m)\\
&~+\frac{1}{1-2q}\sum_{m\in[T+G,2T]}r_2(m)(m-T)^{1-2q}
\end{split}
\end{equation}
and
\begin{equation}
\begin{split}
\sum_{m\in[T,2T-G]}r_2(m)I_m^+
=&~\frac{1}{2q-1}G^{1-2q}\sum_{m\in[T,2T-G]}r_2(m)\\
&~-\frac{1}{2q-1}\sum_{m\in[T,2T-G]}r_2(m)(2T-m)^{1-2q}
\end{split}
\end{equation} 
and we estimate
$$\sum_{m\in[T+G,2T]}r_2(m)(m-T)^{1-2q}\ll_\epsilon \int_{T+G}^{2T} t^\epsilon(t-T)^{1-2q}dt
=\int_G^T (t+T)^\epsilon t^{1-2q}dt$$
which is $O_\epsilon((G+T)^\epsilon G^{2-2q})$.

Similarly, one may show that $$\sum_{m\in[T,2T-G]}r_2(m)(2T-m)^{1-2q}=O_\epsilon((G+T)^\epsilon G^{2-2q}).$$

So we find
$$\frac{1}{T}\sum_{m\in[T,2T]} r_2(m)I_m =\frac{2\pi}{2q-1}G^{1-2q}(1+O(T^{-1+\theta}))$$
where $\theta$ denotes the exponent in the circle law:
$$\sum_{m\leq T}r_2(m)=\pi T+O(T^\theta)$$ 

The terms $m\in[0,T]\cup[2T,3T]$ give a contribution of lower order.

To see this consider $m\in[0,T]$ (the other case is very similar). 

We have
$$\int_{T-m}^{2T-m}\tau^{-2q}~\mathbbm{1}(\tau \mid G\leq |\tau|\leq T)d\tau$$
and since $m\leq T$ implies $[T-m,2T-m]\subset\R_+$, we can evaluate this integral as
$$
=
\begin{cases}
\int_{T-m}^{T}\tau^{-2q}d\tau, \quad \text{if}\; m\leq T-G,\\
\\
\int_G^{T}\tau^{-2q}d\tau, \quad \text{if}\; m>T-G.
\end{cases}
$$

Therefore the main contribution is of the form
$$
\frac{1}{2q-1}G^{1-2q}\frac{1}{T}\sum_{m\in[T-G,T]}r_2(m)= \frac{1}{2q-1}G^{1-2q}(\pi G/T+O(T^{-1+\theta})).
$$
and we recall $G=o(T)$.

\end{document}